\documentclass[twocolumn,aps,american,pra,floatfix,nofootinbib,superscriptaddress,longbibliography,10pt]{revtex4-2}

\usepackage{amsfonts}
\usepackage{amsmath}
\usepackage{amssymb}
\usepackage{amsthm}
\usepackage{bbm}
\usepackage{bm}
\usepackage{mathrsfs}
\usepackage{MnSymbol}
\usepackage{accents}
\usepackage{braket}
\usepackage{multirow}
\usepackage[table]{xcolor}

\usepackage{epsfig}
\usepackage{graphics}
\usepackage{graphicx}
\usepackage{subfigure}
\usepackage{tikz}
\usetikzlibrary{arrows.meta}
\usetikzlibrary{quotes,fit}
\usetikzlibrary{datavisualization.formats.functions}
\usepackage[matrix,frame,arrow]{xy}
\usepackage[compat=0.6]{yquant}
\useyquantlanguage{groups}

\usepackage{times}

\usepackage{soul}

\usepackage[pdfstartview=FitH]{hyperref}
\usepackage{hypernat}
\hypersetup{
    colorlinks=true,   
    linkcolor=cyan,    
    citecolor=magenta, 
    filecolor=magenta, 
    urlcolor=cyan,     
    runcolor=cyan
}
\usepackage[capitalise]{cleveref} 

\newtheorem{theorem}{Theorem}
\newtheorem{lemma}{Lemma}
\newtheorem{corollary}{Corollary}
\newtheorem{remark}{Remark}

\newcommand{\bes}{\begin{subequations}}
\newcommand{\ees}{\end{subequations}}
\newcommand{\beq}{\begin{equation}}
\newcommand{\eeq}{\end{equation}}

\DeclareMathOperator{\Tr}{Tr}

\begin{document}

\author{Domenico D'Alessandro}
\affiliation{Department of Mathematics, Iowa State University, Ames, Iowa, USA}

\author{Phattharaporn Singkanipa}
\affiliation{Center for Quantum Information Science \& Technology, University of
Southern California, Los Angeles, CA 90089, USA}
\affiliation{Department of Physics \& Astronomy, University of Southern California,
Los Angeles, California 90089, USA}

\author{Daniel Lidar}
\affiliation{Center for Quantum Information Science \& Technology, University of
Southern California, Los Angeles, CA 90089, USA}
\affiliation{Department of Physics \& Astronomy, University of Southern California,
Los Angeles, California 90089, USA}
\affiliation{Department of Electrical \& Computer  Engineering,  University of Southern California, Los Angeles, California 90089, USA}
\affiliation{Department of Chemistry, University of Southern California,
Los Angeles, California 90089, USA}
\affiliation{Quantum Elements, Inc., Thousand Oaks, CA}

\title{Proof of the Error Scaling for Universally Robust Dynamical Decoupling Sequences}

\date{\today}

\begin{abstract}
Universally robust dynamical decoupling (UR$n$) sequences were proposed to compensate pulse imperfections arising from arbitrary experimental parameters while achieving high-order error suppression with only a linear increase in the number of pulses. Although their performance was supported by analytical arguments, numerical simulations, and experiments, a complete mathematical proof of the claimed order of error compensation has been absent. In this work, we present a rigorous proof for UR$n$ DD sequences with even $n$. Using a series expansion of a quantity whose modulus is the fidelity $F$, we derive necessary and sufficient conditions for the cancellation of its coefficients up to, but not including, order $n$. The UR$n$ phase prescription satisfies these conditions, and therefore $1-F=O(\epsilon^n)$. Our results establish the UR$n$ construction on firm analytical grounds and clarify the structure responsible for its high-order robustness.
\end{abstract}

\maketitle

\section{Introduction}

Protecting coherent quantum evolution against environmental noise and control imperfections is a central problem in quantum information processing, quantum sensing, magnetic resonance, and coherent optical storage. A basic strategy is dynamical decoupling (DD), beginning with Hahn spin echo~\cite{Hahn1950} and its multipulse extensions such as the Carr-Purcell~\cite{CarrPurcell1954}, and Carr-Purcell-Meiboom-Gill (CPMG)~\cite{MeiboomGill1958} sequences for nuclear magnetic resonance, and later generalized by the quantum information community into systematic open-loop decoupling methods and higher-order constructions~\cite{Viola:98,Viola1999,KhodjastehLidar2005,Santos:2006:150501,Uhrig2007,West:2010:130501,Wang:10,Xia:2011uq,Bookatz:2016aa,brown2024efficient,Yi:2026aa}. The common idea is to apply control pulses so that unwanted system-environment couplings are averaged out over the dynamical-evolution cycle~\cite{Viola:99a,Zanardi:1999fk,ByrdLidar:01}.

In realistic implementations, however, DD performance is often limited by pulse imperfections. This has motivated robust DD families designed to compensate control errors such as flip-angle and detuning errors, together with broader analyses of pulse-error tolerance in DD protocols~\cite{Souza2011,Souza2012,Quiroz:2013fv,SuterAlvarez2016}. In parallel, composite pulses provide a closely related route to robustness by replacing a single imperfect pulse with a phase-engineered sequence \cite{Levitt1986,Brown:04}. Universally robust DD sequences lie at this interface: they combine DD-type refocusing with phase engineering familiar from composite-pulse design.

Genov \textit{et al.} introduced universally robust DD sequences of order $n$ (UR$n$) for dephasing-dominated environments \cite{PhysRevLett.118.133202}. Their construction uses carefully chosen pulse phases so that systematic pulse errors cancel to increasingly high order, while the total number of pulses grows only linearly with the targeted compensation order. In the terminology of Ref.~\cite{PhysRevLett.118.133202}, ``universal'' refers to robustness with respect to pulse imperfections rather than to arbitrary noise models or computation. The analysis assumes identical phased pulses within one sequence and an environment whose correlation time is longer than the duration of that sequence. Closely related ideas were later developed for universally robust composite pulses with arbitrary excitation profiles~\cite{PhysRevA.101.013827}. 

Ref.~\cite{PhysRevLett.118.133202} supported the UR$n$ proposal by analytical arguments, numerical simulations, and experiments, but, to our knowledge, a complete mathematical proof of the claimed cancellation result for even $n$ has not appeared in the literature.
The purpose of the present paper is to supply that proof. We work in the same effective pulse-cycle model as \cite{PhysRevLett.118.133202}, expand the normalized Hilbert-Schmidt (HS) overlap $G(\epsilon):=\frac12\Tr(U_0^\dagger U_\epsilon)$ between the ideal and non-ideal unitary evolution operators $U_0$ and $U_\epsilon$ in the pulse-error parameter $\epsilon$, and derive exact necessary and sufficient conditions for the relevant amplitude coefficients to vanish. We then prove that the phase prescription of \cite{PhysRevLett.118.133202} satisfies those conditions for even $n$: all coefficients of $G(\epsilon)-1$ through order $n-1$ vanish for every value of the unknown effective phase $\alpha$, while the dependence on $\beta$ drops out of the HS overlap. Consequently, the fidelity $F=|G(\epsilon)|$ satisfies $1-F=O(\epsilon^n)$. The endpoint coefficient at order $\epsilon^n$ is not identically zero as a function of $\alpha$; hence, for fixed phases independent of $\alpha$, it is nonzero for generic $\alpha$.

This paper is organized as follows. In \Cref{sec:setup} we introduce the pulse model, explain the phase parametrization of imperfect pulses, and recall the UR phase prescription. In \cref{sec:math-form} we translate the robustness problem into a Taylor expansion of the HS overlap $G(\epsilon)$ and derive the trace identities whose verification implies high-order cancellation. This shows that proving high-order robustness is equivalent to proving certain Fourier-type identities, stated in \cref{subsec:main-th}, in particular \Cref{th:eq_40,th:eq_39}. The endpoint obstruction at order $\epsilon^n$ is handled separately in \cref{th:order-n1}. \Cref{sec:proof} contains the proofs of \Cref{th:eq_40,th:eq_39}. The appendices provide the proofs of auxiliary lemmas.

\section{Setup and definitions}\label{sec:setup}
A UR$n$ sequence consists of an even number $n$ of imperfect DD pulses separated by idle periods of free evolution. 
Each DD pulse is parametrized by the error parameter $\epsilon$ and unknown phases $\alpha$ and $\beta$, together with a controlled phase $\varphi$, as 
\beq\label{eq:Uephi}
U_\epsilon(\varphi):=\begin{pmatrix} \epsilon e^{i\alpha} & \sqrt{1-\epsilon^2} e^{i(\beta+\varphi)} 
\cr -\sqrt{1-\epsilon^2}e^{-i(\beta+\varphi)} & \epsilon e^{-i\alpha}
 \end{pmatrix}. 
\eeq

The physical interpretation of \cref{eq:Uephi} is discussed in \cite{PhysRevLett.118.133202}. In the URDD model, $U_\epsilon(\varphi)$ is an effective propagator for one free-evolution--pulse--free-evolution cycle. The parameters $p=1-\epsilon^2$, $\alpha$, and $\beta$ encode the net effect of pulse imperfections and accumulated phases during that cycle, while the experimentally controlled phase shift enters through the replacement $\beta\to\beta+\varphi$. We return to this effective-cycle picture in the next subsection.

The UR$n$ sequence is then given by a product of $n$ pulses of the form $U_\epsilon(\varphi_k)$, where the phases $\varphi_k$ ($k=1,2,\dots,n$) serve as control parameters, i.e.,
\beq
\label{eq:UE}
U_\epsilon:=U_\epsilon(\varphi_n)U_\epsilon(\varphi_{n-1})\cdots U_\epsilon(\varphi_1). 
\eeq
UR$n$ sequences are designed by choosing the phases $\{\varphi_k\}_{k=1}^n$ so as to maximize the fidelity
\beq\label{eq:fidelity1}
F:=\frac{1}{2}\left| \Tr(U_0^\dagger U_\epsilon )\right|,
\eeq
for arbitrary unknown $\alpha$ and $\beta$. Here $U_0$ denotes the propagator $U_\epsilon$ evaluated at $\epsilon = 0$, and $F$ is normalized so that $0\le F\le 1$. Equivalently, we write
\beq
G(\epsilon):=\frac12\Tr(U_0^\dagger U_\epsilon),
\quad F=|G(\epsilon)|.
\eeq
The trace is the HS inner product, defined more generally by
$\langle A,B\rangle := \Tr(A^\dagger B)$.
All coefficient cancellations below are proved for $G(\epsilon)$, or equivalently for the unnormalized HS overlap $\Tr(U_0^\dagger U_\epsilon)$. Note that this is stronger than what is needed for the fidelity. 

\subsection{Error and control model motivation}

The parametrization in \cref{eq:Uephi} is useful because it separates the experimentally controllable phase $\varphi$ from the unknown pulse imperfections. Following \cite{PhysRevLett.118.133202}, we model a single imperfect pulse by its transition probability $p=1-\epsilon^2$ together with two accumulated phases, here denoted by $\alpha$ and $\beta$, without assuming any particular pulse shape or microscopic error mechanism. In this way, variations in pulse amplitude, detuning, duration, and pulse shape, as well as slowly varying dephasing over one DD cycle, are absorbed into a small set of effective parameters. A controlled phase shift of the drive acts as $\Omega(t)\to \Omega(t)e^{i\varphi}$ and therefore enters the propagator through the replacement $\beta\to\beta+\varphi$. This makes the pulse phases $\varphi_k$ the natural control variables in UR sequences and explains why we can analyze robustness against arbitrary underlying experimental parameters at the level of the effective propagator. Throughout the UR construction, we further assume coherent evolution during a single DD sequence and identical phased pulses, so that the same effective pulse parameters apply from pulse to pulse \cite{PhysRevLett.118.133202}.
\subsection{Pulse decomposition and \texorpdfstring{$U_0$}{U0}}
We define a $z$-axis rotation $R(\varphi)$ and a modified $Y$ gate
\beq\label{eq:RZY}
R(\varphi):=\begin{pmatrix} e^{i\frac{\varphi}{2}} & 0 \cr 0 & e^{-i\frac{\varphi}{2}} \end{pmatrix}, \quad 
Y:=\begin{pmatrix}0 & e^{i(\beta -\alpha)}\cr - e^{-i(\beta -\alpha)} & 0 \end{pmatrix}, 
\eeq
and rewrite $U_\epsilon(\varphi)$ [\cref{eq:Uephi}] as
\beq\label{eq:withR}
U_\epsilon(\varphi)=R(\varphi) U_\epsilon(0)R^\dagger(\varphi)=R(\varphi) U_\epsilon(0)R(-\varphi).
\eeq
Likewise, $U_\epsilon(0)$ is
\beq\label{eq:UE3}
U_\epsilon(0)=R(2\alpha)\left( \epsilon {\bf 1} +\sqrt{1-\epsilon^2}Y \right),
\eeq
where ${\bf 1}$ is a $2\times 2$ identity matrix. Note that
\beq\label{eq:verif}
R(c)YR(b)Y=-R(c-b).
\eeq

With $\varphi_{n+1}:=0$, we define two angles as a function of $\varphi_k$ for $k=1,2,\dots,n$,
\beq\label{eq:DC}
\Delta_k:=\varphi_k-\varphi_{k+1}, \quad \chi_k:=\Delta_k+2\alpha.
\eeq
Substituting these definitions into \cref{eq:withR} and using \cref{eq:UE}, we obtain
\beq\label{eq:UE2}
U_\epsilon=R(\Delta_n) U_\epsilon(0)R(\Delta_{n-1}) U_\epsilon(0)\cdots R(\Delta_1) U_\epsilon(0)R^{\dagger}(\varphi_1). 
\eeq

When $\epsilon=0$, \cref{eq:UE2} can be written as 
\beq\label{eq:U01}
\begin{aligned}
U_0&=R(\chi_n)YR(\chi_{n-1})Y\cdots R(\chi_1)YR^\dagger(\varphi_1)\\
   &=(-1)^{{n/2}} R\left( \chi_n-\chi_{n-1}+\chi_{n-2}-\cdots -\chi_1\right)R^\dagger(\varphi_1)\\
&=(-1)^{{n/2}} R\left( \Delta_n-\Delta_{n-1}+\Delta_{n-2}-\cdots -\Delta_1\right)R^\dagger(\varphi_1),
\end{aligned}
\eeq
where we have used the identity in \cref{eq:verif} repeatedly to obtain the second line, and the fact that $n$ is even to obtain the third line. 

A common shift $\varphi_j\mapsto \varphi_j+\theta$ conjugates both $U_\epsilon$ and $U_0$ by the same $R(\theta)$, and therefore leaves $\Tr(U_0^\dagger U_\epsilon)$ invariant. We fix this phase gauge by setting $\varphi_1=0$, so that, henceforth, $R^\dagger(\varphi_1)={\bf 1}$. Defining
\beq
\label{eq:Gamma}
\Gamma :=\sum_{j=1}^n (-1)^{j-1} \Delta_j ,
\eeq
and using \cref{eq:U01}, this gives
\beq
\label{eq:U_0^dag}
U_0^\dagger =(-1)^{n/2}R\left(\Gamma\right),
\eeq
and hence
\beq
\label{eq:fidelity2}
F=\frac{1}{2} \left| \Tr\left[ R\left(\Gamma\right) U_\epsilon \right]\right|.
\eeq
This choice also coincides with the phase convention used in \cite{PhysRevLett.118.133202}, as stated next in \cref{th:MAIN}.

\subsection{The chosen angles \texorpdfstring{$\varphi_k$}{phik}}
\begin{theorem}[UR phase prescription of \cite{PhysRevLett.118.133202}]\label{th:MAIN}
Let $n\ge 4$ be even. Define
\beq
\Phi:=
\begin{cases}
\pm \dfrac{\pi}{m}, & n=4m,\\[8pt]
\pm \dfrac{2m\pi}{2m+1}, & n=4m+2.
\end{cases}
\eeq
Then the choice $\varphi_1=0$ and, for $k=2,3,\dots,n$,
\beq\label{eq:choice}
\varphi_k:=(k-1)\varphi_2 + \frac{(k-1)(k-2)}{2}\Phi,
\eeq
has the following properties:
\begin{enumerate}
\item The normalized HS overlap satisfies
\beq
G(\epsilon)=1+O(\epsilon^n)
\eeq
for arbitrary $\alpha$ and $\beta$. Hence $1-F=O(\epsilon^n)$. Equivalently, all coefficients of $G(\epsilon)-1$ of orders $1,\dots,n-1$ vanish; since only even powers occur, this is cancellation through order $n-2$.

\item For this UR phase family, the coefficient of order $\epsilon^n$ in $1-F$ is not identically zero as a function of $\alpha$.
Hence, for generic $\alpha$, the fidelity error is of exact order $\epsilon^n$.

\item This order is optimal at fixed sequence length $n$: no $\alpha$-independent choice of phases $\{\varphi_k\}_{k=1}^n$ can make all non-constant coefficients of $G(\epsilon)-1$ through order $\epsilon^n$ vanish identically as functions of $\alpha$.

\end{enumerate}
\end{theorem}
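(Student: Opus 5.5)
The plan is to expand $G(\epsilon)$ by multilinearity, reduce the cancellation conditions to vanishing of certain exponential sums in the phases, and then verify those sums for the prescription \eqref{eq:choice}.

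\emph{Step 1: subset expansion.} Set $c:=\sqrt{1-\epsilon^2}$. Using \cref{eq:UE2,eq:UE3} with $\varphi_1=0$, write
\[
U_\epsilon=\prod_{k=n}^{1} R(\chi_k)\bigl(\epsilon{\bf 1}+cY\bigr).
\]
Expanding multilinearly gives a sum over subsets $S\subseteq\{1,\dots,n\}$, where $S$ is the set of positions at which $\epsilon{\bf 1}$ is chosen. Each such term carries the weight $\epsilon^{|S|}c^{\,n-|S|}$.

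The matrix $R(\Gamma)\,W_S$ is diagonal exactly when $W_S$ contains an even number of $Y$'s; otherwise it is off-diagonal and has zero trace. Since $n$ is even, only $|S|$ even survives.

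For $|S|$ even, I will apply \cref{eq:verif} repeatedly, in the form $YR(b)Y=-R(-b)$. Each $R(\chi_k)$ then ends up with a sign $\sigma_k(S)\in\{\pm1\}$, which is flipped by every $Y$ to its right. This gives
\[
\tfrac12\Tr\bigl[R(\Gamma)W_S\bigr]=\pm\cos\Bigl(\tfrac12\bigl[\Gamma+\textstyle\sum_k\sigma_k(S)\chi_k\bigr]\Bigr).
\]
For $S=\emptyset$ the argument is $0$, by \cref{eq:U01}, so that term contributes exactly $c^n$.

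Removing the $Y$ at a chosen position $j$ flips all signs $\sigma_k$ to its left. Relative to $S=\emptyset$, each chosen position therefore contributes a change proportional to a partial alternating sum $P_j=\sum_{k\le j}(-1)^{k}\chi_k$. Here $\alpha$ enters through $2\alpha$ times an integer count, and the $\beta$-dependence cancels, because $Y$ enters only in pairs.

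Writing $z_j:=e^{iP_j}$, this yields
\[
G(\epsilon)=\sum_{r=0}^{n/2}\epsilon^{2r}(1-\epsilon^2)^{(n-2r)/2}\,A_{2r},
\]
where $A_{2r}$ is a signed "alternating elementary symmetric" sum in the $z_j$'s. Concretely, $A_{2r}$ is a sum over $j_1<\dots<j_{2r}$ of products $z_{j_1}\bar z_{j_2}z_{j_3}\cdots$.

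\emph{Step 2: necessary and sufficient conditions.} Matching the coefficients of $\epsilon^{2r}$ in $G-1$ for $2r\le n-2$ gives a triangular system. It is equivalent to the generating-function identity
\[
\sum_r A_{2r}x^{r}=(1+x)^{n/2}\pmod{x^{n/2}}.
\]
I will convert this to conditions on sums of powers of the $z_j$. These are the Fourier-type identities the paper announces in \Cref{th:eq_40,th:eq_39}. I expect them to take the form
\[
\sum_{j}(-1)^j z_j^{\,q}\,(\text{polynomial in }j)=0\quad\text{for }1\le q<n/2,
\]
to be proved through Newton-type identities.

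\emph{Step 3: verification for the prescription.} Under \eqref{eq:choice} we have $\Delta_k=-\varphi_2-(k-1)\Phi$ for $k<n$, which is linear in $k$. Hence the $P_j$ are quadratic in $j$, and after pairing consecutive indices they become linear in $j$. The needed sums then reduce to geometric sums of $e^{iq\Phi}$-type roots of unity over a full period.

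The choice $\Phi=\pm\pi/m$ (for $n=4m$), respectively $\pm 2m\pi/(2m+1)$ (for $n=4m+2$), is exactly what makes $e^{iq\Phi}\ne1$ for all relevant $q$ while the sum runs over a full period, so each sum vanishes. The boundary index $k=n$, where $\varphi_{n+1}=0$, breaks linearity, and I will handle it by a separate correction term.

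This step, organizing the combinatorics so that the alternating products collapse to geometric sums, is the main obstacle. As a first attempt I would prove the vanishing of power sums and lift it to the $A_{2r}$ by induction on $r$.

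\emph{Parts 2 and 3.} For both parts I isolate the top $\alpha$-frequency at order $\epsilon^n$. Only $S=\{1,\dots,n\}$ contributes $e^{\pm i n\alpha}$, because every other subset has a strictly smaller count of $2\alpha$'s. Its coefficient is a unimodular factor times a constant, so it is nonzero for every choice of phases, and this coefficient cannot be cancelled by any other term.

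Consequently the $\epsilon^n$ coefficient of $G$ is a non-constant trigonometric polynomial in $\alpha$. This proves part 3 for any $\alpha$-independent phases. It also gives part 2: since $1-F=1-|G|$ and the lower orders vanish, the coefficient of $\epsilon^n$ in $1-F$ equals minus the real part of that coefficient, which is not identically zero in $\alpha$.
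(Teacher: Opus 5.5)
Your Steps 1--2 are sound and close to the paper's reduction. The subset expansion, the parity argument and the $\beta$-cancellation all match. Your condition $\sum_r A_{2r}x^r\equiv(1+x)^{n/2}\pmod{x^{n/2}}$, with $x=\epsilon^2/(1-\epsilon^2)$ and $A_{2r}$ read as a real part since $G$ is real, is a neat repackaging of the paper's equivalence theorem $\Tr(\tilde P_n^\dagger\tilde P_{n-2k})=2\binom{n/2}{k}$ for $k<n/2$.

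Your argument for parts 2 and 3 is also correct, and slightly cleaner than the paper's. Only the all-$\epsilon$ term $\tilde P_0=R(2n\alpha)$ carries $\alpha$-frequency $n$, so the $\epsilon^n$ coefficient is never identically zero. This holds without first assuming that the lower coefficients vanish.

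The gap is part 1 itself, which is the heart of the theorem. You never show that the UR phases satisfy these conditions, and the mechanism you propose does not work.

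The sums $\sum_{j_1<\cdots<j_{2r}}z_{j_1}\bar z_{j_2}\cdots\bar z_{j_{2r}}$ are ordered alternating sums, not symmetric functions of the $z_j$. For example, take $n=6$, $r=2$, two entries equal to $w$ and the rest $1$. Placing the $w$'s at positions $(1,2)$ gives real part $7+8\,\mathrm{Re}\,w$. Placing them at $(1,3)$ gives $4+8\,\mathrm{Re}\,w+3\,\mathrm{Re}\,w^2$. So Newton's identities do not apply, and these sums are not functions of the power sums $\sum_j z_j^q$.

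Your anticipated conditions of the form $\sum_j(-1)^j z_j^q(\text{polynomial in }j)=0$ come with no derivation. Vanishing of single-index sums does not by itself control multi-index ordered sums, so ``lifting power-sum vanishing by induction on $r$'' has nothing behind it. Only at $r=1$ does your idea go through: $\mathrm{Re}\sum_{a<b}z_a\bar z_b=\tfrac12\bigl(|\sum_j z_j|^2-n\bigr)$ turns the first condition into the geometric-sum statement $\sum_j z_j=0$. From $r=2$ on the conditions are genuinely multi-index; the paper's $A_s^{(k)}$ are sums over $2k$-tuples.

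You would also have to enforce the conditions separately for every $\alpha$-Fourier component, which your sketch does not address.

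The paper supplies both missing ingredients:
\begin{itemize}
\item Reflection involutions $\tau,\tau_I,\tau_B$ on index tuples give $A_s^{(k)}=-(A_{-s}^{(k)})^*$ for $s\neq0$ and show $A_0^{(k)}$ is real. This disposes of all $\alpha$-dependent parts.
\item The zero-signature sum is then read off from the $(1,1)$ entry of the ordered product $P(t)=\prod_j(I+tK_j)$. For a suitable shift $\gamma$ this collapses to $(1-t^2)^{n/2}I$. The collapse comes from pairing factors with $T_jT_{j'}=(1-t^2)I$ whenever $\omega^{\beta_j-\beta_{j'}}=-1$, together with the half-period symmetry $T_{j+n/2}=ET_jE$ or $JT_jJ$.
\end{itemize}
Some ordered-product identity of this kind is what your Step 3 is missing.
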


Our goal is to provide a rigorous mathematical proof of this theorem. For that purpose, it is convenient to write the phase differences $\Delta_k$ defined in \cref{eq:DC} in terms of the chosen phases $\varphi_k$ from \cref{eq:choice}. For $1\le k\le n-1$, direct calculation gives
\beq\label{eq:Deltak2}
\Delta_k=-\varphi_2-(k-1)\Phi \quad (1\le k\le n-1),
\eeq
while for the last interval we have
\beq\label{eq:Deltan}
\Delta_n=\varphi_n=(n-1)\varphi_2+\frac{(n-1)(n-2)}{2}\Phi.
\eeq

We first consider $U_0$. Evaluating the sum in \cref{eq:fidelity2} gives
\beq
\Gamma = -n\varphi_2-\frac{n(n-2)}{2}\Phi.
\eeq
For the UR choices of $\Phi$, the quantity $\frac{n(n-2)}{2}\Phi$ is an integer multiple of $4\pi$, and therefore
\beq
R\left(\Gamma\right)=R(-n\varphi_2).
\eeq
Equivalently, using \cref{eq:U01},
\beq
U_0=(-1)^{n/2}R(n\varphi_2).
\eeq

\section{Taylor expansion of the HS overlap}\label{sec:math-form}
We now analyze the Taylor coefficients of the HS overlap $G(\epsilon)$, equivalently of the unnormalized quantity $\Tr(U_0^\dagger U_\epsilon)$, in the error parameter $\epsilon$. This allows us to formulate mathematical conditions that are equivalent to the cancellation of the amplitude coefficients up to the desired order.

\subsection{Taylor expansion of \texorpdfstring{$U_\epsilon$}{Uepsilon}}\label{subsec:taylor}
Substituting $U_\epsilon(0)$ from \cref{eq:UE3} in \cref{eq:UE2} and denoting for brevity $R_j:=R(\chi_j)$, we have 
\beq
\begin{aligned}
U_\epsilon &=
R_n(\epsilon {\bf 1}+\sqrt{1-\epsilon^2} Y)
R_{n-1}(\epsilon {\bf 1}+\sqrt{1-\epsilon^2} Y)
\cdots \\
& \quad \cdots 
R_1(\epsilon {\bf 1}+\sqrt{1-\epsilon^2} Y)
=
\sum_{l=0}^n \epsilon^{n-l}(1-\epsilon^2)^{l/2}\tilde P_l,
\end{aligned}
\eeq
where $\tilde P_l$ is the sum of all products $A_n A_{n-1}\cdots A_1$ such that each $A_j$ is either $R_j$ or $R_jY$, and exactly $l$ of the factors are of the latter type (in particular, $l$ denotes the number of $Y$'s that appear in each term in the sum). This sum contains $\binom{n}{l}$ terms. 
For example, when $n=4$,
\beq\label{eq:ex-P2}
\begin{aligned}
    \tilde P_2&= R_4YR_3YR_2R_1+R_4YR_3R_2YR_1+R_4YR_3R_2R_1Y\\
    &\quad+R_4R_3YR_2YR_1+R_4R_3YR_2R_1Y+R_4R_3R_2YR_1Y. 
\end{aligned}
\eeq
Note from the first line of \cref{eq:U01} that $U_0 = \tilde P_n$, i.e., when each factor in the product is $R_jY$.
Therefore, we have
\beq\label{eq:Startpoi}
\Tr(U_0^\dagger U_\epsilon)=\sum_{l=0}^n \epsilon^{n-l} (1-\epsilon^2)^{l/2} \Tr (\tilde P_n^\dagger \tilde P_l).
\eeq

\begin{remark}\label{Impooss} 
$\tilde P_n=U_0=(-1)^{n/2}R(n\varphi_2)$ is a diagonal matrix,
whereas all terms in the sum defining $\tilde P_l$ are antidiagonal when $l$ is odd (recall that $n$ is even). Therefore, the contributions with odd $l$ vanish, and \cref{eq:Startpoi} can be rewritten as
\beq\label{eq:Startpoi2}
\Tr(U_0^\dagger U_\epsilon)=\sum_{\substack{l=0\\l \text{ even}}}^n \epsilon^{n-l} (1-\epsilon^2)^{l/2} \Tr (\tilde P_n^\dagger \tilde P_l).
\eeq
Thus the unnormalized HS overlap has only even powers of $\epsilon$; after division by $2$, the same is true for $G(\epsilon)$. 
\end{remark}

Since the sum is over even $l$, we can rewrite \cref{eq:Startpoi2} using the binomial formula as 
\beq
    \Tr(U_0^\dagger U_\epsilon)=\sum_{\substack{l=0\\l \text{ even}}}^n \sum_{k=0}^{l/2} \binom{l/2}{k}(-1)^k \epsilon^{2k+n-l} \Tr (\tilde P^\dagger_n \tilde P_l).
\eeq
Thus the trace-amplitude expansion is a finite polynomial in $\epsilon^2$.  
To find  explicit coefficients  for the even powers of $\epsilon$, we rearrange the above sum as follows
\beq
\begin{aligned}
    &\Tr(U_0^\dagger U_\epsilon)
=\sum_{\substack{l=0\\l \text{ even}}}^n \sum_{k=-l/2}^0  \binom{l/2}{k + l/2}(-1)^{k+l/2} \epsilon^{2k+n} \Tr (\tilde P^\dagger_n \tilde P_l)\\
&\quad=\sum_{k=-{n/2}}^0 \sum_{\substack{l=-2k\\l \text{ even}}}^n \binom{l/2}{k+l/2} (-1)^{k+l/2}\epsilon^{2k+n} \Tr(\tilde P_n^\dagger \tilde P_l)\\
&\quad=
\sum_{k=0}^{n/2} \left( \sum_{\substack{l=n-2k\\l \text{ even}}}^n 
\binom{l/2}{k - n/2 + l/2}
(-1)^{k-{n/2}+l/2}\Tr(\tilde P_n^\dagger \tilde P_l) \right) \epsilon^{2k},
\end{aligned}
\eeq 
where we shifted the index in the inner sum in the first equality. 
In the second equality, we swapped the order of summation, and in the last line, we performed an additional index shift in $k$. Therefore, the coefficient of $\epsilon^{2k}$ after the index shift, denoted by $a_k$, is
\beq\label{eq:ak}
\begin{aligned}
    a_k&=\sum_{\substack{l=n-2k\\l \text{ even}}}^n \binom{l/2}{k - n/2 + l/2} (-1)^{k-{n/2}+l/2}\Tr(\tilde P_n^\dagger \tilde P_l) \\
    &=\sum_{\substack{j=0\\j \text{ even}}}^{2k}
\binom{{(n-j)/2}}{k - {j/2}} (-1)^{k-{j/2}} \Tr(\tilde P_n^\dagger \tilde P_{n-j}), 
\end{aligned}
\eeq
and after index rescaling $j=2s$, we obtain 
\beq\label{eq:akFinal}
a_k=\sum_{s=0}^k \binom{n/2-s}{k-s} (-1)^{k-s}
\Tr(\tilde P_n^\dagger \tilde P_{n-2s}).
\eeq

From \cref{eq:akFinal}, we immediately obtain
\beq
a_0=\Tr(\tilde P_n^\dagger\tilde P_n)=2.
\eeq
Thus, if
\beq
\Tr(U_0^\dagger U_\epsilon)=\sum_{k=0}^{n/2}a_k\epsilon^{2k},
\eeq
then $G(\epsilon)=\frac12\sum_{k=0}^{n/2}a_k\epsilon^{2k}$, so $a_0=2$ gives $G(0)=1$ and hence $F(0)=1$. 

Next, we derive a necessary and sufficient condition on $\Tr(\tilde P_n^\dagger \tilde P_{n-2s})$ for $s = 1, \dots, k_{\max}$ under which $a_k=0$ for every $k=1,2,\dots,k_{\max}$, where $k_{\max}$ is any fixed integer with $1\le k_{\max}\le n/2$. Later we will apply this with $k_{\max}=n/2-1$.

\begin{theorem}\label{eq:equivalence}
Fix an integer $k_{\max}$ with $1\le k_{\max}\le n/2$. Then the coefficients $a_1,a_2,\dots,a_{k_{\max}}$ in \cref{eq:akFinal} all vanish if and only if, for every $k=0,1,2,\dots,k_{\max}$,
\beq\label{eq:cond5}
\Tr(\tilde P_n^\dagger \tilde P_{n-2k})=2\binom{n/2}{k}.
\eeq 
\end{theorem}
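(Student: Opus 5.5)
The plan is to view \cref{eq:akFinal} as a lower-triangular linear system linking $(a_0,\dots,a_{k_{\max}})$ to $(T_0,\dots,T_{k_{\max}})$, where $T_s:=\Tr(\tilde P_n^\dagger\tilde P_{n-2s})$. The diagonal coefficient of $T_k$ in $a_k$ is $\binom{n/2-k}{0}(-1)^0=1$, so the system is unitriangular and hence invertible. Consequently the values $T_0,\dots,T_{k_{\max}}$ are uniquely determined by $a_0,\dots,a_{k_{\max}}$, and conversely. Since $a_0=T_0=2$ always holds, the statement reduces to a single check: the specific vector $T_s=2\binom{n/2}{s}$ must produce $a_k=0$ for $1\le k\le k_{\max}$.

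\emph{Sufficiency.} Substitute $T_s=2\binom{n/2}{s}$ into \cref{eq:akFinal} and use the identity $\binom{n/2}{s}\binom{n/2-s}{k-s}=\binom{n/2}{k}\binom{k}{s}$. This gives
\beq
a_k=2\binom{n/2}{k}\sum_{s=0}^k\binom{k}{s}(-1)^{k-s}=2\binom{n/2}{k}(1-1)^k,
\eeq
which vanishes for $k\ge1$. For $k\le k_{\max}\le n/2$ all binomials are ordinary, so no boundary cases arise. This also matches the intuition that when the conditions hold, the trace equals $2(1-\epsilon^2+\epsilon^2)^{n/2}$ up to the relevant order.

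\emph{Necessity.} Suppose $a_1=\dots=a_{k_{\max}}=0$. We argue by induction on $k$. The base case $T_0=2=2\binom{n/2}{0}$ is automatic. For the inductive step, assume $T_s=2\binom{n/2}{s}$ for all $s<k$, with $k\le k_{\max}$. The equation $a_k=0$ reads
\beq
T_k=-\sum_{s=0}^{k-1}\binom{n/2-s}{k-s}(-1)^{k-s}T_s .
\eeq
Inserting the hypothesis and reusing the same binomial identity, the right-hand side becomes
\beq
-2\binom{n/2}{k}\bigl[(1-1)^k-1\bigr]=2\binom{n/2}{k},
\eeq
which closes the induction.

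Both directions rest on the same computation, so I expect no step to be a real obstacle. The only point needing care is the binomial product identity together with the index ranges, which lie within $0\le s\le k\le n/2$.
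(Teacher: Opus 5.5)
Your proposal is correct and follows essentially the same route as the paper: the paper also inducts on $k$, using the lower-triangular structure of \cref{eq:akFinal} together with the identity $\binom{n/2-s}{k-s}\binom{n/2}{s}=\binom{n/2}{k}\binom{k}{s}$ and $(1-1)^k=0$. The only difference is that the paper runs a single induction that proves both directions at each level, whereas you prove sufficiency by direct substitution and necessity by induction, which is an equivalent organization.
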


\begin{proof}
For $k=0$, \cref{eq:akFinal} gives
\beq
a_0=\Tr(\tilde P_n^\dagger \tilde P_n)=2,
\eeq
so  \cref{eq:cond5} is verified for $k=0$ .  Next  we  prove by induction on $k$ that $a_1=\cdots=a_k=0$ is equivalent to 
$\Tr(\tilde P_n^\dagger \tilde P_{n-2s})=2\binom{n/2}{s}$
for $s=1,\dots,k$.

Assume this equivalence has already been proved up to level $k-1$ (i.e., 
$a_1=\cdots=a_{k-1}=0$ is equivalent to
$\Tr(\tilde P_n^\dagger \tilde P_{n-2s})=2\binom{n/2}{s}$ for $s=1,\dots,k-1$).
Under this induction hypothesis, \cref{eq:akFinal} shows that the condition $a_k=0$ is equivalent to
\beq
\Tr \left(\tilde P_n^\dagger \tilde P_{n-2k}\right)
=
2\sum_{s=0}^{k-1}(-1)^{k-1-s}
\binom{n/2-s}{k-s}
\binom{n/2}{s}.
\eeq

For integers $b\ge k\ge 1$, we have
\beq
\begin{aligned}
\binom{b-s}{k-s}\binom{b}{s}
&=
\frac{(b-s)!}{(k-s)!(b-k)!} \frac{b!}{s!(b-s)!}\\
&=
\frac{b!}{k!(b-k)!} \frac{k!}{s!(k-s)!}
=
\binom{b}{k}\binom{k}{s}.
\end{aligned}
\eeq
Therefore
\beq
\sum_{s=0}^{k-1}(-1)^{k-1-s}
\binom{b-s}{k-s}
\binom{b}{s}
=
\binom{b}{k}
\sum_{s=0}^{k-1}(-1)^{k-1-s}\binom{k}{s}.
\eeq
By the binomial theorem,
\beq
0=(1-1)^k=\sum_{s=0}^{k}(-1)^{k-s}\binom{k}{s},
\eeq
so
\beq
\sum_{s=0}^{k-1}(-1)^{k-1-s}\binom{k}{s}=1.
\eeq
Taking $b=n/2$, we obtain
\beq
\Tr(\tilde P_n^\dagger \tilde P_{n-2k})=2\binom{n/2}{k},
\eeq
which is exactly \cref{eq:cond5} at level $k$. This completes the induction and proves the theorem.
\end{proof}

Recall that $l$ in $\tilde P_l$ denotes the number of $Y$ factors appearing in each term of the sum. We are interested in $\tilde P_l$ for even $l=2,4,\dots,n$. Equivalently, writing $l=n-2k$, we study $\tilde P_{n-2k}$, which is the form we use in \cref{eq:cond5}. 

In order to express the necessary and sufficient condition \eqref{eq:cond5} in a more explicit form, we define three auxiliary functions, ${\cal L}$, $S$, and $W$, on finite increasing sequences of integer indices.  

\subsection{Auxiliary functions \texorpdfstring{${\cal L}$, $S$, and $W$}{L, S, and W}}\label{subsec:L-func}

Define the alternating linear form $\mathcal{L}(X)$ on an ordered $n$-tuple $X=(x_1,x_2,\dots,x_n)$ by
\beq
\mathcal{L}(X)=\sum_{j=1}^n(-1)^{j-1}x_j=x_1-x_2+x_3-\dots-x_n .
\eeq

Throughout this section, a sequence $r=(r_1,r_2,\dots,r_{2k})$ means a strictly increasing $2k$-tuple of integers in the interval $[1,n]$. 
Given a sequence $r$, we define $\mathcal{L}_r(X)$ as a generalization of $\mathcal{L}(X)$, with coefficients taking values $\pm 1$ as follows. Starting from $+1$, the signs alternate up to $r_1$. At $r_1$, the sign of $x_{r_1}$ is repeated once, after which the signs alternate again up to $r_2$. At $r_2$, the sign of $x_{r_2}$ is repeated, and the alternation then continues. This procedure is repeated for all subsequent indices. For example,
\beq
\mathcal{L}_{(2,5)}(x_1,x_2,\dots,x_7)=x_1-x_2-x_3+x_4-x_5-x_6+x_7.
\eeq
Here $r = (2,5)$, hence the signs of $x_2$ and $x_5$ are repeated at $x_3$ and $x_6$, respectively, and the signs at all other positions alternate.

With these definitions, the function $\mathcal{L}_r(X)$ can also be defined recursively, in terms of $\mathcal{L}(X)$, by setting $r_0=0$ and $r_{2k+1}=n$ as 
\beq\label{eq:Altrecurs}
\mathcal{L}_r(X)=\sum_{j=0}^{2k} (-1)^{j+r_j} \mathcal{L}(x_{r_j+1},\dots,x_{r_{j+1}}).
\eeq
See \cref{app:Altrecurs} for the proof. Here, when $r_j=r_{j+1}$, the corresponding block is empty and its contribution is understood to be $0$.

Given a sequence $r$, we define two more  functions, the {\it signature} { $S=S(r)$} and {\it weighted signature}  { $W=W(r)$}, 
\beq
S(r) := \sum_{j=1}^{2k}(-1)^{j+r_j},
\quad
W(r) := \sum_{j=1}^{2k}(-1)^{j+r_j} r_j,
\label{eq:SWdef}
\eeq
which will play a central role in the proof. 
To express $\tilde P_{n-2k}$ in \cref{eq:cond5} in terms of $\mathcal L_r$, we first reverse the order of the $\chi_j$'s and $\Delta_j$'s in \cref{eq:DC}. Define, for $j=1,2,\dots,n$,
\beq\label{eq:DC2}
D_j:=\Delta_{n-j+1}, \quad C_j=\chi_{n-j+1}=D_j+2\alpha.
\eeq
We also define
\beq\label{eq:XCXDXN}
\begin{aligned}
X_C&:=(C_1,C_2,\dots,C_n), \quad X_D:=(D_1,D_2,\dots,D_n)\\
X_N&:=(1,2,\dots,n).
\end{aligned}
\eeq

\begin{lemma}\label{TTlemma}
For $k=1,2,\dots,n/2$, write
\beq
\sum_{r^<}:=\sum_{1\le r_1<r_2<\cdots<r_{2k}\le n}.
\eeq
Then
\beq\label{eq:PTbasic2}
\begin{aligned}
\tilde P_{n-2k}
&=(-1)^{\frac n2+k}\sum_{r^<}R\bigl(\mathcal L_r(X_C)\bigr)\\
&= (-1)^{\frac n2+k}\sum_{r^<}
R\bigl(2\alpha S(r)\bigr)R\bigl(\mathcal L_r(X_D)\bigr).
\end{aligned}
\eeq
\end{lemma}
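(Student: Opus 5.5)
The plan is to evaluate each term of $\tilde P_{n-2k}$ directly and reduce it to a single $z$-rotation using \cref{eq:verif}.

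\emph{Labeling the terms.} A term of $\tilde P_{n-2k}$ is a product $A_nA_{n-1}\cdots A_1$ in which exactly $2k$ factors are the bare $R_j$ and the remaining $n-2k$ factors are $R_jY$. Reading from the left, the factors are $A_n,\dots,A_1$, i.e.\ $R(C_1),\dots,R(C_n)$ in the reversed labels of \cref{eq:DC2}. I would index each term by the positions $r_1<\dots<r_{2k}$, in this left-to-right order, of the factors that carry no $Y$. This gives a bijection with the strictly increasing $2k$-tuples $r$, which accounts for the $\binom{n}{2k}$ terms.

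\emph{Evaluating a single term.} Since every $R$ commutes with every other $R$, and \cref{eq:verif} gives $Y R(b) Y = -R(-b)$ (equivalently $YR(b)=R(-b)Y$), the rule is simple. Move each $Y$ to the right through the rotations that follow it, flipping the sign of each rotation angle it passes. Each adjacent pair of $Y$'s then cancels to $Y^2=-\mathbf 1$, with the signs tracked as they appear.

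I would carry this out block by block, using the recursive description \cref{eq:Altrecurs}. The positions $r_0=0<r_1<\dots<r_{2k}<r_{2k+1}=n$ split the word into blocks. Within the block of positions $r_j+1,\dots,r_{j+1}$, every factor except possibly the last is of the form $R(C_i)Y$. Here I have to be careful about whether the bare $R$ sits at the start or the end of a block; I would choose the convention matching \cref{eq:Altrecurs}, which places the repeated sign right after each $r_j$.

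A string of alternating $R$'s and $Y$'s collapses to $\pm R(\mathcal L(\cdot))$ on that block, possibly times one leftover $Y$. The overall sign $(-1)^{j+r_j}$ in front of each block comes from the parity of the total number of $Y$'s to its left. That number equals $r_j-j$ (the positions before the block minus the $j$ bare factors), so the parity matches \cref{eq:Altrecurs}.

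\emph{Counting signs.} The total number of $Y$'s is $n-2k$, which is even, so they pair off completely and the term is diagonal. The pairs contribute $(-1)^{(n-2k)/2}=(-1)^{n/2+k}$, which yields the first line of \cref{eq:PTbasic2} once I invoke the proof of \cref{eq:Altrecurs} from \cref{app:Altrecurs}.

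The second line follows by linearity of $\mathcal L_r$. Substituting $C_j=D_j+2\alpha$ gives $\mathcal L_r(X_C)=\mathcal L_r(X_D)+2\alpha\,\mathcal L_r(1,\dots,1)$. It then remains to check that $\mathcal L_r(1,\dots,1)=S(r)$. By \cref{eq:Altrecurs}, this quantity is $\sum_j(-1)^{j+r_j}\mathcal L(1,\dots,1)$ over the blocks, where a block of length $m$ contributes $1$ if $m$ is odd and $0$ if $m$ is even. Telescoping over block endpoints should reduce this to $\sum_{j=1}^{2k}(-1)^{j+r_j}$; I would verify this identity carefully, for example by induction on $k$. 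Finally, $R(a+b)=R(a)R(b)$ completes the argument.

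\emph{Main obstacle.} The hard part is the sign and parity bookkeeping: confirming that the "repeated sign at $r_j$" convention agrees exactly with the placement of the bare $R$ factors in the reversed ordering, and that the global sign is $(-1)^{n/2+k}$ rather than something depending on $r$. I would check the $n=4$, $k=1$ terms of \cref{eq:ex-P2} explicitly to pin down the convention before writing the general induction on the number of factors.
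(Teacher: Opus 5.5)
Your proposal is correct and follows essentially the same route as the paper: you index the terms by the positions of the bare rotations in the reversed labeling, and you evaluate each block $r_j+1,\dots,r_{j+1}$ via \cref{eq:verif} to get the prefactor $(-1)^{j+r_j}$ (your count of $r_j-j$ preceding $Y$'s is exactly right). You then collect the $r$-independent sign $(-1)^{(n-2k)/2}$ from the $Y$-pairs, and finish with linearity plus the telescoping identity $\mathcal L_r(1,\dots,1)=S(r)$, which needs $r_0=0$, $r_{2k+1}=n$ and $n$ even. The convention you were unsure about is forced: the bare factor is always the last one in its block, at position $r_{j+1}$, so the sign repeats immediately after each $r_j$, exactly as in the definition of $\mathcal L_r$.
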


\begin{proof}
After the reversal in \cref{eq:DC2}, write $\bar R_j:=R(C_j)$. From the definition of $\tilde P_l$ given at the beginning of \cref{subsec:taylor}, a term in $\tilde P_{n-2k}$ is obtained by choosing exactly $2k$ positions
\beq
1\le r_1<\cdots<r_{2k}\le n
\eeq
at which the factor is $\bar R_j$, while at all remaining positions the factor is $\bar R_jY$. Let $T_r$ denote the corresponding term, and write $r_0=0$, $r_{2k+1}=n$.

Group $T_r$ into the consecutive blocks determined by the tuple $r$. On the block from $r_j+1$ to $r_{j+1}$, repeated application of \cref{eq:verif}
shows that the contribution of that block to the total rotation angle is
$(-1)^{j+r_j} \mathcal L(C_{r_j+1},\dots,C_{r_{j+1}})$,
exactly as in the recursive definition \cref{eq:Altrecurs} of $\mathcal L_r(X_C)$. Since the total number of $Y$ factors is $n-2k$, the number of $Y$-pairs is $(n-2k)/2$, and each pair contributes one minus sign. Therefore
\beq
\begin{aligned}
T_r&=(-1)^{(n-2k)/2}R\bigl(\mathcal L_r(X_C)\bigr)\\
&= (-1)^{\frac n2+k}R\bigl(\mathcal L_r(X_C)\bigr).   
\end{aligned}
\eeq
Summing over all admissible $r^<$ gives the first identity in \cref{eq:PTbasic2}.

For the second identity, use $C_j=D_j+2\alpha$ and the linearity of $\mathcal L_r$:
\beq
\mathcal L_r(X_C)
=
\mathcal L_r(X_D)+2\alpha \mathcal L_r(1,\dots,1).
\eeq
It remains to compute $\mathcal L_r(1,\dots,1)$. By \cref{eq:Altrecurs},
\beq
\mathcal L_r(1,\dots,1)
=
\sum_{j=0}^{2k}(-1)^{j+r_j} \mathcal L(\underbrace{1,\dots,1}_{r_{j+1}-r_j}).
\eeq
Since
\beq
\mathcal L(\underbrace{1,\dots,1}_{m})=\frac{1-(-1)^m}{2},
\eeq
we obtain
\beq
\mathcal L_r(1,\dots,1)
=
\frac12\sum_{j=0}^{2k}(-1)^j\bigl((-1)^{r_j}-(-1)^{r_{j+1}}\bigr).
\eeq
Using $r_0=0$ and $r_{2k+1}=n$ with $n$ even, this simplifies to
\beq
\mathcal L_r(1,\dots,1)
=
\sum_{j=1}^{2k}(-1)^{j+r_j}
=
S(r).
\eeq
Therefore
\beq
\mathcal L_r(X_C)
=
2\alpha S(r)+\mathcal L_r(X_D),
\eeq
which proves the second identity in \cref{eq:PTbasic2}.
\end{proof}

The example of $\tilde P_2$ when $n=4$ in \cref{eq:ex-P2}, after reversing the order $\chi_j \to C_{n-j+1}$ and writing $\bar R_j:=R(C_j)$, can be written as
\beq\label{eq:ex-P2-inv}
\begin{aligned}
    \tilde P_2&= \bar{R}_1Y\bar{R}_2Y\bar{R}_3\bar{R}_4+\bar{R}_1Y\bar{R}_2\bar{R}_3Y\bar{R}_4+\bar{R}_1Y\bar{R}_2\bar{R}_3\bar{R}_4Y\\
    &\quad+\bar{R}_1\bar{R}_2Y\bar{R}_3Y\bar{R}_4+\bar{R}_1\bar{R}_2Y\bar{R}_3\bar{R}_4Y+\bar{R}_1\bar{R}_2\bar{R}_3Y\bar{R}_4Y. 
\end{aligned}
\eeq

Using \cref{TTlemma} in this example, the summation condition is $r^<:1 \le r_1 < r_2 \le n=4$. Accordingly, $r=(r_1,r_2)$ takes the values $(1,2)$, $(1,3)$, $(1,4)$, $(2,3)$, $(2,4)$, and $(3,4)$. These six choices correspond exactly to the six terms in \cref{eq:ex-P2-inv}, in agreement with the first line of \cref{eq:PTbasic2}.

Next, we relate $\mathcal{L}_r(X_N)$ to $S(r)$ and $W(r)$.
\begin{lemma}\label{Nicelemma}
Assume $n$ is even. For $r=(r_1,r_2,\dots,r_{2k})$,
\beq\label{eq:nicelemmaeq}
\mathcal{L}_r(X_N)=\frac{S(r)-n}{2}+W(r).
\eeq
\end{lemma}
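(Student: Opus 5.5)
The approach is to compute $\mathcal L_r(X_N)$ directly from the recursive block formula \cref{eq:Altrecurs}, with $X_N=(1,2,\dots,n)$. Each block is then an alternating sum of consecutive integers, which has a closed form. Set $r_0=0$ and $r_{2k+1}=n$. The $j$-th block contributes
\beq
(-1)^{j+r_j}\,\mathcal L(r_j+1,r_j+2,\dots,r_{j+1}).
\eeq
The first step is therefore a closed formula for $\mathcal L(a+1,\dots,b)$ with $a\le b$.

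Pairing consecutive terms gives
\beq
\mathcal L(a+1,\dots,b)=\frac{1}{2}\Bigl[(-1)^{b-a-1}\,b+(a+1)\Bigr]+\frac{(-1)^{b-a}-1}{4}.
\eeq
Equivalently, I would write it in a form that separates the endpoint contributions,
\beq
\mathcal L(a+1,\dots,b)=(-1)^{a}\bigl[f(a)-f(b)\bigr],\qquad f(m):=(-1)^m\Bigl(\frac{m}{2}+\frac14\Bigr).
\eeq
This is verified by checking the case $b=a+1$, where the value is $a+1$, and then using additivity: splitting the interval adds an extra factor $(-1)^{b-a}$ on the second piece, which is consistent with the prefactor $(-1)^a$. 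The empty case $a=b$ gives $0$, as the paper requires.

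Substituting into \cref{eq:Altrecurs}, the prefactors $(-1)^{r_j}$ cancel and leave
\beq
\mathcal L_r(X_N)=\sum_{j=0}^{2k}(-1)^{j}\bigl[f(r_j)-f(r_{j+1})\bigr].
\eeq
This sum is summed by parts. Each interior index $r_j$ with $1\le j\le 2k$ appears twice: once as $(-1)^j f(r_j)$ from block $j$, and once as $-(-1)^{j-1}f(r_j)$ from block $j-1$. Its total coefficient is therefore $2(-1)^j f(r_j)$.

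The endpoints contribute $f(0)=\tfrac14$ from $j=0$. From $j=2k$ they contribute $-(-1)^{2k}f(n)$, which equals $-(\tfrac n2+\tfrac14)$ because $n$ is even. Hence
\beq
\mathcal L_r(X_N)=\frac14-\frac n2-\frac14+\sum_{j=1}^{2k}2(-1)^{j+r_j}\Bigl(\frac{r_j}{2}+\frac14\Bigr)=-\frac n2+W(r)+\frac{S(r)}{2},
\eeq
which is exactly \cref{eq:nicelemmaeq}.

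The only real obstacle is getting the closed form of the alternating block sum with correct signs and parities. I would fix it by checking small cases, such as $a=0,b=2$ giving $-1$ and $a=1,b=4$ giving $3$, before doing the telescoping. As a sanity check, $r=\emptyset$ ($k=0$) gives $\mathcal L(1,\dots,n)=-n/2$, and the example $r=(2,5)$ with $n=7$-style computations can be tested similarly with even $n$.
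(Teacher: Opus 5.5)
Your proof is correct and is essentially the paper's argument. You apply \cref{eq:Altrecurs} to $X_N$, write each block as $(-1)^{r_j}\bigl[f(r_j)-f(r_{j+1})\bigr]$ with $f(m)=(-1)^m(\frac m2+\frac14)$ (exactly the combination the paper's closed form produces), and telescope using $r_0=0$ and $r_{2k+1}=n$ even.

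The one slip is your first ``pairing'' formula for $\mathcal L(a+1,\dots,b)$. It is off by $\pm\frac12$: for $a=0$, $b=2$ it gives $-\frac12$ instead of $-1$, so it fails your own sanity check, and it is not equivalent to the $f$-form. The correct version is $\frac12\bigl[a-(-1)^{b-a}b\bigr]+\frac{1-(-1)^{b-a}}{4}$. This does not affect the argument, because you only use the $f$-form, which you verify independently through the length-one case and additivity.
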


The proof is given in \cref{app:Nicelemma}.

The next lemma computes $\mathcal{L}_{r}(X_D)$, which is needed in \cref{eq:PTbasic2}, explicitly for the UR phase choice of \cref{th:MAIN}.

\begin{lemma}\label{lem:LrXD}
With the definitions in \cref{eq:DC2,eq:Deltak2,eq:Deltan}, we have
\begin{equation}\label{eq:plu7}
\mathcal{L}_{r}(X_D)=n\varphi_2+\frac{n(n-1)}{2}\Phi-(\varphi_2+n\Phi)S(r)+\Phi \mathcal{L}_{r}(X_N).
\end{equation}
Equivalently, using \cref{eq:nicelemmaeq},
\begin{equation}\label{eq:LrXD-final}
\mathcal{L}_{r}(X_D)
=
n\varphi_2+\frac{n(n-2)}{2}\Phi
-\left(\varphi_2+n\Phi-\frac{\Phi}{2}\right)S(r)+\Phi W(r).
\end{equation}
\end{lemma}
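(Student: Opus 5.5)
The plan is to compute $X_D$ explicitly and write it as an affine function of the index plus a correction in the first slot. Then the linearity of $\mathcal L_r$, together with the value $\mathcal L_r(1,\dots,1)=S(r)$ obtained in the proof of \cref{TTlemma}, gives \cref{eq:plu7}. Finally \cref{Nicelemma} converts this into \cref{eq:LrXD-final}.

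\emph{Step 1: the entries of $X_D$.} By \cref{eq:DC2}, $D_j=\Delta_{n-j+1}$. For $2\le j\le n$ we have $1\le n-j+1\le n-1$, so \cref{eq:Deltak2} gives
\begin{equation*}
D_j=-\varphi_2-(n-j)\Phi=-(\varphi_2+n\Phi)+j\Phi .
\end{equation*}
For $j=1$, \cref{eq:Deltan} gives $D_1=(n-1)\varphi_2+\frac{(n-1)(n-2)}{2}\Phi$. We compare $D_1$ with the value $-(\varphi_2+n\Phi)+\Phi$ that the affine formula would give at $j=1$. The difference is
\begin{equation*}
\delta:=D_1+\varphi_2+(n-1)\Phi=n\varphi_2+\Bigl(\tfrac{(n-1)(n-2)}{2}+(n-1)\Bigr)\Phi=n\varphi_2+\tfrac{n(n-1)}{2}\Phi .
\end{equation*}
Hence
\begin{equation*}
X_D=-(\varphi_2+n\Phi)(1,\dots,1)+\Phi X_N+\delta\, e_1,
\end{equation*}
where $e_1=(1,0,\dots,0)$.

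\emph{Step 2: apply $\mathcal L_r$.} $\mathcal L_r$ is a linear form with coefficients $\pm1$. We need its coefficient on $x_1$. By the sign rule, or directly from \cref{eq:Altrecurs}, the sign at position $1$ is always $+1$. The repeated signs occur only after positions $r_j\ge1$ and so affect only positions $\ge 2$. This holds even if $r_1=1$, since the repetition then only changes the sign of $x_2$. Thus $\mathcal L_r(e_1)=1$. Combining this with $\mathcal L_r(1,\dots,1)=S(r)$ from the proof of \cref{TTlemma} gives
\begin{equation*}
\mathcal L_r(X_D)=\delta-(\varphi_2+n\Phi)S(r)+\Phi\,\mathcal L_r(X_N),
\end{equation*}
which is \cref{eq:plu7}.

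\emph{Step 3: the equivalent form.} Substitute \cref{eq:nicelemmaeq}, $\mathcal L_r(X_N)=\frac{S(r)-n}{2}+W(r)$, into \cref{eq:plu7}. The constant becomes
\begin{equation*}
n\varphi_2+\Bigl(\tfrac{n(n-1)}{2}-\tfrac n2\Bigr)\Phi=n\varphi_2+\tfrac{n(n-2)}{2}\Phi .
\end{equation*}
The coefficient of $S(r)$ becomes $-(\varphi_2+n\Phi-\Phi/2)$, and the term $\Phi W(r)$ remains unchanged. This is \cref{eq:LrXD-final}.

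There is no real obstacle; the only points needing care are the bookkeeping in Step 1 and the sign of $x_1$ in Step 2. In Step 1, the index reversal moves the exceptional interval $\Delta_n$ to position $1$, and one must check that $\delta$ indeed simplifies to $n\varphi_2+\frac{n(n-1)}{2}\Phi$.
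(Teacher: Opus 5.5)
Your proof is correct and is essentially the paper's argument. The paper also separates off $D_1$ and uses $c_1=1$, $\sum_j c_j=S(r)$ and $\sum_j c_j\,j=\mathcal L_r(X_N)$, then substitutes \cref{eq:nicelemmaeq}; your decomposition $X_D=-(\varphi_2+n\Phi)(1,\dots,1)+\Phi X_N+\delta\,e_1$ is the same bookkeeping written in vector form.
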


\begin{proof}
From \cref{eq:DC2}, \cref{eq:Deltak2}, and \cref{eq:Deltan}, we obtain
\beq
\begin{aligned}
D_1&=(n-1)\varphi_2+\frac{(n-1)(n-2)}{2}\Phi,\\
D_j&=-\varphi_2-(n-j)\Phi=-(\varphi_2+n\Phi)+j\Phi
\quad (2\le j\le n).
\end{aligned}
\eeq
Write
\beq
\mathcal{L}_r(X)=\sum_{j=1}^n c_j x_j,
\quad c_j\in\{\pm1\}.
\eeq
By construction, the coefficient of the first entry is always $c_1=1$. Moreover,
\beq
\sum_{j=1}^n c_j=S(r),
\quad
\sum_{j=1}^n c_j j=\mathcal{L}_r(X_N).
\eeq
Hence
\beq
\begin{aligned}
\mathcal{L}_r(X_D)
&=D_1+\sum_{j=2}^n c_jD_j\\
&=D_1-(\varphi_2+n\Phi)\sum_{j=2}^n c_j+\Phi\sum_{j=2}^n c_j j\\
&=D_1-(\varphi_2+n\Phi)(S(r)-1)+\Phi(\mathcal{L}_r(X_N)-1).
\end{aligned}
\eeq
Substituting the value of $D_1$ and simplifying gives \cref{eq:plu7}. The second formula follows immediately by substituting \cref{eq:nicelemmaeq} into \cref{eq:plu7}.
\end{proof}

The preceding lemmas now combine to give the final form of $\tilde P_{n-2k}$ used in the rest of the paper.

\begin{corollary}\label{cor:PTbasic3}
Define
\beq\label{eq:eta}
\eta:=2 \alpha-\varphi_2-n\Phi+\frac{\Phi}{2}.
\eeq
Then, for the UR choices of $\Phi$,
\begin{equation}\label{eq:PTbasic3}
\tilde P_{n-2k}=(-1)^{\frac n2+k}\sum_{r^<}R\left(\eta S(r)\right)R\left(n\varphi_2+\Phi W(r)\right).
\end{equation}
\end{corollary}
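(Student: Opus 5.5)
The plan is to substitute the explicit formula of \cref{lem:LrXD} into the second line of \cref{TTlemma}, then use the additivity of $R$ and the UR choice of $\Phi$ to remove a constant phase. Since $R(a)R(b)=R(a+b)$ for diagonal rotations, each summand $R(2\alpha S(r))R(\mathcal L_r(X_D))$ in \cref{eq:PTbasic2} equals $R\bigl(2\alpha S(r)+\mathcal L_r(X_D)\bigr)$.

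By \cref{eq:LrXD-final}, the total angle is
\begin{equation*}
2\alpha S(r)+n\varphi_2+\frac{n(n-2)}{2}\Phi-\Bigl(\varphi_2+n\Phi-\frac{\Phi}{2}\Bigr)S(r)+\Phi W(r).
\end{equation*}
The terms proportional to $S(r)$ collect into $\eta S(r)$, with $\eta$ as in \cref{eq:eta}. The remaining angle is $n\varphi_2+\Phi W(r)+\frac{n(n-2)}{2}\Phi$.

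The only step that uses the UR choice of $\Phi$ is disposing of the constant $\frac{n(n-2)}{2}\Phi$. As noted before \cref{sec:math-form}, this quantity is an integer multiple of $4\pi$ for the UR values. For $n=4m$ and $\Phi=\pm\pi/m$ it equals $\pm 4(2m-1)\pi$. For $n=4m+2$ and $\Phi=\pm 2m\pi/(2m+1)$ it equals $\pm 4m^2\pi\cdot 2=\pm 8m^2\pi$ after the factor $2m+1$ cancels. Because $R(\theta)$ has entries $e^{\pm i\theta/2}$, its period is $4\pi$ rather than $2\pi$, so $R(\theta+4\pi q)=R(\theta)$ for integer $q$, and the constant drops out. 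This period check is the only point that needs care.

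Finally, splitting $R(\eta S(r)+n\varphi_2+\Phi W(r))=R(\eta S(r))R(n\varphi_2+\Phi W(r))$ and keeping the prefactor $(-1)^{n/2+k}$ together with the sum over $r^<$ from \cref{TTlemma} gives \cref{eq:PTbasic3}.
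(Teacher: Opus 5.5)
Your proof is correct and follows the paper's argument essentially step for step: you substitute \cref{eq:LrXD-final} into the second line of \cref{eq:PTbasic2}, merge the commuting $z$-rotations, collect the $S(r)$ terms into $\eta S(r)$, and drop $R\bigl(\tfrac{n(n-2)}{2}\Phi\bigr)$ because that angle is a multiple of $4\pi$. Your explicit evaluation of that angle ($\pm4(2m-1)\pi$ and $\pm8m^2\pi$) and your check that $R$ has period $4\pi$ fill in details that the paper only asserts.
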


\begin{proof}
Substituting \cref{eq:LrXD-final} into the second line of \cref{eq:PTbasic2} gives
\beq
\begin{aligned}
&\tilde P_{n-2k}
=(-1)^{\frac n2+k}\sum_{r^<}
R\bigl(2\alpha S(r)\bigr)\times\\
&\quad R \left(
n\varphi_2+\frac{n(n-2)}{2}\Phi
-\left(\varphi_2+n\Phi-\frac{\Phi}{2}\right)S(r)+\Phi W(r)
\right).
\end{aligned}
\eeq
Since all factors are $z$-axis rotations, they commute and their angles add. Therefore
\beq
\tilde P_{n-2k}
=
(-1)^{\frac n2+k}\sum_{r^<}
R \left(
\frac{n(n-2)}{2}\Phi+n\varphi_2+\eta S(r)+\Phi W(r)
\right).
\eeq
For the UR choices of $\Phi$, the quantity $\frac{n(n-2)}{2}\Phi$ is an integer multiple of $4\pi$, so $R \left(\frac{n(n-2)}{2}\Phi\right)=I$. Splitting the remaining angle into the $S(r)$-part and the $W(r)$-part yields \cref{eq:PTbasic3}.
\end{proof}

Since condition \cref{eq:cond5} must hold for arbitrary $\alpha$, it must equivalently hold for all values of $\eta$. This is the form of $\tilde P_{n-2k}$ that will be used in the proofs of the main theorems.

\section{Reduction to Fourier identities}\label{subsec:main-th}

For a UR$n$ sequence with even $n \ge 4$, recall the chosen value of $\Phi$ from \cref{th:MAIN} and define 
\beq
\omega := e^{i\Phi/2},
\quad
q := \omega^2 = e^{i\Phi}.
\eeq
In both UR families, and for either sign choice of $\Phi$, we have $\omega^n = 1$, and $q$ is a primitive $(n/2)$'th root of unity.

For a strictly increasing sequence $r$ of length $2k$, each summand in $S(r)$ is equal to $\pm1$. Since $2k$ is even, $S(r)$ is always even. Its possible values are therefore
\beq
0,\ \pm2,\ \pm4,\ \dots,\ \pm 2k.
\eeq

For an even integer $s$ and for an integer  $k$ satisfying $1 \le k \le n/2$, we also define
\beq
A_s^{(k)} := \sum_{\substack{1\le r_1<\cdots<r_{2k}\le n\\ S(r)=s}}
\omega^{W(r)}.
\label{eq:Asdef}
\eeq
Its complex conjugate is
\beq
\left(A_s^{(k)}\right)^*=\sum_{\substack{1\le r_1<\cdots<r_{2k}\le n\\ S(r)=s}}\omega^{-W(r)}.
\label{eq:Asconj}
\eeq

We now rewrite $\Tr(\tilde P_n^{\dagger}\tilde P_{n-2k})$ in \cref{eq:cond5} using \cref{eq:PTbasic3}. 
From \cref{eq:U01} and the evaluation of $\Gamma$, we have
\beq
\tilde P_n=U_0=(-1)^{n/2}R(n\varphi_2).
\eeq
Also, \cref{eq:PTbasic3} gives
\beq
\tilde P_{n-2k}=(-1)^{\frac n2+k}\sum_{r^<}R\left(\eta S(r)\right)R\left(n\varphi_2+\Phi W(r)\right).
\eeq
Therefore
\beq
\begin{aligned}
\Tr(\tilde P_n^\dagger \tilde P_{n-2k})
&=
(-1)^k\sum_{r^<}
\Tr \bigl(
R(-n\varphi_2)
R\left(\eta S(r)\right)\\
&\quad\quad\quad\times R\left(n\varphi_2+\Phi W(r)\right)
\bigr)\\
&=
(-1)^k\sum_{r^<}\Tr \left(R\left(\eta S(r)+\Phi W(r)\right)\right)\\
&=
2(-1)^k\sum_{r^<}
\cos\left(\frac{\eta}{2}S(r)+\frac{\Phi}{2}W(r)\right).
\end{aligned}
\eeq
Splitting the sum according to $S(r)=0$ and $|S(r)|=M$ gives
\beq
\begin{aligned}
&\Tr(\tilde P_n^\dagger \tilde P_{n-2k})
=
2(-1)^k
\sum_{\substack{r^<\\S(r)=0}}
\cos\left(\frac{\Phi}{2}W(r)\right)\\
&\quad+
2(-1)^k
\sum_{M=2,4,\dots,2k}
\sum_{\substack{r^<\\|S(r)|=M}}
\cos\left(\frac{\eta}{2}S(r)+\frac{\Phi}{2}W(r)\right).
\end{aligned}
\eeq
Using \cref{eq:Asdef,eq:Asconj}, this becomes
\beq\label{eq:trace-As}
\begin{aligned}
&\Tr(\tilde P_n^\dagger \tilde P_{n-2k})
=
(-1)^k\left( A_0^{(k)}+(A_0^{(k)})^*\right)\\
&\quad+
(-1)^k\sum_{M=2,4,\dots,2k}
\biggl[
e^{i\eta M/2}\left(A_M^{(k)}+(A_{-M}^{(k)})^*\right)\\
&\quad\quad\quad\quad\quad\quad
+
e^{-i\eta M/2}\left((A_M^{(k)})^*+A_{-M}^{(k)}\right)
\biggr].
\end{aligned}
\eeq
For a fixed $k$ with $1\le k<n/2$, the condition \eqref{eq:cond5} must hold for all values of $\eta$. By \cref{eq:trace-As}, this is equivalent to
\beq
A_0^{(k)}+(A_0^{(k)})^*
=
2(-1)^k\binom{n/2}{k},
\eeq
and, for every even $M=2,4,\dots,2k$,
\beq
A_M^{(k)}+(A_{-M}^{(k)})^*=0.
\eeq
The following two theorems establish exactly these Fourier-type identities for the UR$n$ phases.

\begin{theorem}[Nonzero-signature identity]\label{th:eq_40}
For the UR choices of $\Phi$, for every even $n\ge 4$ and every $1\le k<n/2$,
\beq
A_s^{(k)}=- \left(A_{-s}^{(k)}\right)^*,
\eeq
for all even integers $s$ with $0<|s|<n$.
\end{theorem}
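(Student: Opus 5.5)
The plan is to encode all the sums $A^{(k)}_s$ at once in a generating function built from a $2\times 2$ transfer matrix, and then compute that matrix's $n$-th power exactly using $\omega^n=1$. The needed identity is $A^{(k)}_s+(A^{(k)}_{-s})^*=0$, i.e.
\[
\sum_{S(r)=s}\omega^{W(r)}+\sum_{S(r)=-s}\omega^{-W(r)}=0 .
\]
So I want to show that the combined generating function, restricted to a fixed nonzero power of the signature variable, vanishes for $k<n/2$.

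\textbf{Step 1: transfer matrix.} Scan the positions $m=1,\dots,n$ and track the sign $\sigma_m=(-1)^{j+m}$, where $j$ is the number of chosen indices $r_i\le m$. The sign evolves simply:
\begin{itemize}
\item an unchosen position flips $\sigma$, since $m$ increases and $j$ does not;
\item a chosen position keeps $\sigma$ and contributes the weight $z^{\sigma}\omega^{\sigma m}$.
\end{itemize}
Introduce a marker $t$ for each chosen index. Let $X$ be the swap matrix and $Z=\mathrm{diag}(z,z^{-1})$. Then
\[
\sum_k t^{2k}\sum_{r}z^{S(r)}\omega^{W(r)}
\]
is a fixed matrix entry (initial and final states determined by $n$ even) of $T_n\cdots T_1$, with $T_m=D^m(X+tZ)D^{-m}$ and $D=\mathrm{diag}(\omega,\omega^{-1})$. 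The odd powers of $t$ drop out by the parity constraint. Using $XD^{-1}=DX$ and $D^n=\mathbf 1$, I expect the product to telescope to $M^n$ (possibly up to a harmless fixed conjugation), where $M=D^{-1}(X+tZ)$. The partner sum with $(s,\omega)\to(-s,\omega^{-1})$ corresponds to $z\to z^{-1}$, $\omega\to\omega^{-1}$. This replaces $M$ by a matrix conjugate to it by $X$, which interchanges the initial and final states.

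\textbf{Step 2: closed form via Cayley--Hamilton.} We have
\[
\det M=\det(X+tZ)=t^2-1,\qquad \tau:=\Tr M=t\,(z\omega^{-1}+z^{-1}\omega).
\]
Hence $M^n=U_{n-1}\,M-(t^2-1)\,U_{n-2}\,\mathbf 1$, where $U_j$ are the Chebyshev-type polynomials in $(\tau,t^2-1)$. The relevant entries then give the generating function of $A^{(k)}_s+(A^{(k)}_{-s})^*$ explicitly. I would then extract the coefficient of $t^{2k}z^{s}$ for $0<|s|$ and $k<n/2$.

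\textbf{Step 3 (main obstacle).} This step is to show that these coefficients vanish, using that $q=\omega^2$ is a primitive $(n/2)$-th root of unity. The cleanest route I would try first is eigenvalue factorization: write $\lambda_\pm$ for the eigenvalues, so $\lambda_++\lambda_-=\tau$ and $\lambda_+\lambda_-=t^2-1$. Then express the combined entry as a symmetric function of $\lambda_\pm$ in which every $z$-dependent term appears only through power sums like $(z\omega^{-1})^a(z^{-1}\omega)^b$, summed over a full residue class. The resulting $q$-geometric sums vanish unless $a-b\equiv 0\pmod{n/2}$. The restriction $k<n/2$, which bounds the $t$-degree, should be exactly what prevents a surviving term with $|s|=n$.

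If the Chebyshev route becomes messy, my fallback is the cyclic shift $m\mapsto m+1 \pmod n$ on index sets. Non-wrapping shifts send $(S,W)\mapsto(-S,-W-S)$, while wrapping shifts send $(S,W)\mapsto(S,W+S-n)$. Averaging over orbits with $\omega^n=1$ would then give the pairing between $S=s$ and $S=-s$ directly.

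Finally, I would check the two UR families ($n=4m$ and $n=4m+2$) and both signs of $\Phi$ only through the properties $\omega^n=1$ and the primitivity of $q$, so no case split should be needed.
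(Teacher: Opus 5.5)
Your primary route breaks at Step~1. In the $\sigma$-basis you describe, the phase $\omega^{\sigma m}$ belongs on the \emph{keep} (chosen) transitions, which are the diagonal entries. So the correct transfer matrix is $T_m=X+tZD^m$. Conjugation by the diagonal matrix $D^m$ cannot produce this, because it leaves diagonal entries unchanged. Instead, $D^m(X+tZ)D^{-m}=XD^{-2m}+tZ$ puts the phase $\omega^{2\sigma_m m}$ on the \emph{flips}. Along any path these multiply to $\omega^{2\sum_{\text{unchosen}}\sigma_m m}=\omega^{n-S}=\omega^{-S}$, so $W$ has disappeared. Indeed, $D^{1/2}MD^{-1/2}=X+t\,\mathrm{diag}(z\omega^{-1},z^{-1}\omega)$ gives $(M^n)_{11}=\sum_{|r|\ \text{even}}t^{|r|}(z\omega^{-1})^{S(r)}$. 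For $n=4$, $\omega=i$:
\[
\sum_{r}t^{|r|}z^{S(r)}\omega^{W(r)}=1-t^2\left(iz^2+2+iz^{-2}\right)-t^4z^4,
\qquad
(M^4)_{11}=1+t^2\left(-3z^2+2-z^{-2}\right)+t^4z^4 .
\]
Even the traces of the two products differ, so no fixed conjugation reconciles them. Since no diagonal gauge can move an $m$-dependent phase onto the diagonal, the telescoping $D^n[(X+tZ)D^{-1}]^n$ has no counterpart for the correct $T_m$. Worse, in your generating function the coefficient of $t^{2k}z^s$ in the combined entry is $\omega^{-s}(N_s+N_{-s})$, where $N_{\pm s}$ counts the sequences of signature $\pm s$. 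This is nonzero (it equals $-4$ for $n=4$, $k=1$, $s=2$), so Steps~2--3 would be aimed at a false identity.

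Your fallback, by contrast, is correct and finishes in one line; it should be the proof. Both shift rules you state are right; the wrapping case uses $\sigma_{2k}=(-1)^{2k+n}=1$. For $s\neq0$, set $h(r)=\omega^{W(r)}$ if $S(r)=s$ and $h(r)=\omega^{-W(r)}$ if $S(r)=-s$, so that $B_s:=A_s+A_{-s}^*=\sum_{S(r)=\pm s}h(r)$. The cyclic shift $\rho$ permutes $\{r: S(r)=\pm s\}$. In all four cases (wrapping or not, $S(r)=s$ or $-s$), your rules together with $\omega^n=1$ give $h(\rho(r))=\omega^s h(r)$. Reindexing yields $B_s=\omega^sB_s$. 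Since $\omega^s\neq1$ for even $0<|s|<n$ (exactly the primitivity of $q$), this forces $B_s=0$.

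This differs from the paper's argument. There, the reflection $r_j\mapsto n+1-r_{2k+1-j}$ (Lemma~\ref{lem:odd_reflection}) only gives $B_s=\omega^sB_s^*$. A separate proof that $B_s$ is real is then needed, via the internal/boundary splitting and the involutions $\tau_I,\tau_B$. The cyclic shift yields the eigen-relation $B_s=\omega^sB_s$ directly, using the same arithmetic input ($\omega^n=1$ and $\omega^s\neq1$).
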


\begin{theorem}[Zero-signature identity]\label{th:eq_39}
For the UR choices of $\Phi$, for every even $n\ge 4$ and every $1\le k<n/2$,
\beq
A_0^{(k)}=\left(A_0^{(k)}\right)^*=(-1)^k\binom{n/2}{k}.
\eeq
\end{theorem}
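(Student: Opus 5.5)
The approach is to package all the $A^{(k)}_s$ into a single generating function and evaluate it with a root-of-unity product identity. For auxiliary variables $t$ and $x$ set
\beq
\mathcal F(t,x):=\sum_{k\ge 0}t^{2k}\sum_{1\le r_1<\cdots<r_{2k}\le n} x^{S(r)}\,\omega^{W(r)} ,
\eeq
so that $A^{(k)}_s$ is the coefficient of $t^{2k}x^{s}$. The theorem then says that the $x^0$-part of $\mathcal F$ agrees with $(1-t^2)^{n/2}=\sum_k(-1)^k\binom{n/2}{k}t^{2k}$ for all orders $t^{2k}$ with $k<n/2$. Reality of $A^{(k)}_0$ will follow once it is shown to equal this integer; as a consistency check, for $n=4$, $k=1$ we get $\omega^{-2}+\omega^{2}=-2$.

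\textbf{Step 1: transfer-matrix form.} The contribution of $r_j$ to $S$ and $W$ is $(-1)^{j+r_j}$ times $1$ or $r_j$. Here the parity of $j$ is the parity of the number of indices already chosen before position $r_j$, plus one. So I will scan the positions $p=1,\dots,n$ while keeping a two-state register for the parity of the count chosen so far. This gives
\beq
\mathcal F=(1,0)\,M_n\cdots M_1\,(1,1)^{T},
\eeq
where $M_p$ is the $2\times2$ matrix with $1$ on the diagonal and off-diagonal entries $t\,(x\omega^{p})^{\pm1}$. The sign on each off-diagonal entry is fixed by the parities of the state and of $p$. Because the sign depends only on parities, I will multiply positions in consecutive pairs $(2i-1,2i)$. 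Each pair product $B_i$ then depends on $i$ only through a factor $\omega^{2i}=q^{i}$. Concretely, I expect $B_i=D^{i}B_0D^{-i}$ with $D=\mathrm{diag}(q^{1/2},q^{-1/2})$ up to normalization. Then $\mathcal F=(1,0)\,D^{n/2}(D^{-1}B_0)^{n/2}(1,1)^{T}$, and $D^{n/2}$ is essentially $\pm\mathbf 1$ because $\omega^n=1$.

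\textbf{Step 2: root-of-unity collapse.} The goal is to write the $x^0$ constant term, computed as $\frac1{2\pi}\int_0^{2\pi}\mathcal F(t,e^{i\theta})\,d\theta$, as a product over $i=1,\dots,n/2$ of factors of the form $(1-q^{i}z)$ with $z$ built from $t^2$ and $x$. Since $q$ is a primitive $(n/2)$-th root of unity, the identity
\beq
\prod_{i=1}^{n/2}(1-q^{i}z)=1-z^{n/2}
\eeq
kills every mixed term below the top degree. What survives is $(1-t^2)^{n/2}$ plus a correction proportional to $t^{n}$, which lies exactly at the excluded order $k=n/2$. An equivalent route, which I would use if the matrix route gets messy, is to diagonalize $D^{-1}B_0$. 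Its eigenvalues $\lambda_\pm$ satisfy a quadratic, so $\lambda_\pm^{n/2}$ is controlled by a Chebyshev-type recursion. The root-of-unity condition should force the $x$-independent part of $\operatorname{tr}\,(D^{-1}B_0)^{n/2}$ to equal $(1-t^2)^{n/2}$ modulo $t^{n}$.

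\textbf{Main obstacle.} The hard step is Step 2: extracting the $x^0$-coefficient cleanly. The signs $(-1)^{j+r_j}$ couple $x$ and $\omega$ so that $x$ appears as both $x$ and $x^{-1}$ within a single pair matrix. I expect to need a substitution such as $z=t^2$ together with a rescaling of the register basis by powers of $x$, so that $S$ becomes a grading. After that substitution the $x^0$-part should reduce to a single trace or product over $i$ to which the cyclotomic identity applies. If this does not produce a clean product, the fallback is combinatorial. I would pair sequences with $S(r)=0$ by a cyclic shift $r\mapsto r+2 \pmod n$, which multiplies $\omega^{W}$ by a power of $q$. Summing over orbits then produces vanishing geometric sums $\sum_i q^{ij}$ for $j\not\equiv0$, and the surviving orbits should count $(-1)^k\binom{n/2}{k}$ by choosing $k$ of the $n/2$ pairs $\{2i-1,2i\}$.
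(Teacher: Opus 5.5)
\textbf{There is a genuine gap.} Your Step~1 structural claim is false, and Step~2, which carries the entire content of the theorem, is only conjectured.

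Take the pair product $B_i=M_{2i}M_{2i-1}$. Its diagonal entries come from choosing neither or both of $2i-1,2i$. In the latter case these are consecutive $r_j,r_{j+1}$ with equal signs $(-1)^{j+r_j}=(-1)^{j+1+r_{j+1}}$. So $(B_i)_{00}=1+t^2x^{2}\omega^{4i-1}$ and $(B_i)_{11}=1+t^2x^{-2}\omega^{1-4i}$. Hence $\Tr B_i$ depends on $i$ through $q^{2i}$. This is not constant once $n\ge 6$, because $q$ is a primitive $(n/2)$-th root of unity. Therefore no $D$ gives $B_i=D^iB_0D^{-i}$, the product is not a power of a single matrix, and the eigenvalue/Chebyshev variant has nothing to act on. (Your initial vector should also be $(1,0)^T$, not $(1,1)^T$.) Nothing in the proposal produces a factorization $\prod_i(1-q^iz)$ of the $x^0$-part.

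The combinatorial fallback fails as stated too. For $S(r)=0$, the shift $r\mapsto r+2 \pmod n$ acts on $\omega^{W(r)}$ in one of two ways:
\begin{itemize}
\item trivially, when no index or two indices wrap around;
\item by complex conjugation, when exactly one index wraps, since that flips every sign $(-1)^{j+r_j}$.
\end{itemize}
So orbit sums have the form $a\,\omega^{W}+b\,\omega^{-W}$, not vanishing geometric sums $\sum_i q^{ij}$. For example, take $n=8$, $k=1$, $\omega=e^{i\pi/4}$. The orbit $(1,3)\to(3,5)\to(5,7)\to(1,7)$ has $W=-2,-2,-2,-6$ and sums to $3\omega^{-2}+\omega^{2}=-2i\neq 0$. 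It cancels only against the orbit of $(2,4)$. Your count of ``surviving orbits'' is therefore unsupported.

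\textbf{The missing idea} is that the $x^0$-coefficient never has to be isolated. By Lemma~\ref{lem:odd_reflection}, $A_0^{(k)}$ is real. By Theorem~\ref{th:eq_40}, $A_{-s}^{(k)}=-(A_s^{(k)})^*$ for $s\neq 0$. Hence at any unimodular specialization $x=\omega^{\gamma}$ with $\gamma$ real, each term $\omega^{s\gamma}A_s^{(k)}+\omega^{-s\gamma}A_{-s}^{(k)}$ is purely imaginary. So the real part of the $t^{2k}$ coefficient of your generating function is exactly $A_0^{(k)}$, and a single well-chosen $\gamma$ suffices.

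The paper uses $P(t)=\prod_j(I+tK_j)$, where $K_j$ is antidiagonal with entries $\omega^{\pm\beta_j}$ and $\beta_j=(-1)^{j+1}(j+\gamma)+c$. This is your $M_p$ specialized at $x=\omega^\gamma$, up to transposition and a constant diagonal conjugation. The paper chooses $(\gamma,c)$ separately for $n=4m$ and for $n=4m+2$ with $m$ odd or even, so that:
\begin{itemize}
\item mirror-image factors have negated off-diagonal entries and multiply to $T_jT_{j'}=(1-t^2)I$;
\item the second half of the product is $ET_jE$ or $JT_jJ$ of the first half.
\end{itemize}
The whole product then collapses to $(1-t^2)^{n/2}I$, giving $\lambda_{2k}=(-1)^k\binom{n/2}{k}$ and hence the theorem. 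Your plan never uses the nonzero-signature identity, and that is why you are forced into the hard constant-term extraction.
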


Note that the theorems above are stated only for $n\ge 4$.
When $n=2$, the only possible value is $k=n/2=1$, so the only coefficient beyond the leading constant term comes from $\tilde P_0$.

Together, \Cref{th:eq_40,th:eq_39} imply \eqref{eq:cond5} for every $1\le k<n/2$. By \Cref{eq:equivalence}, this gives
\beq
a_1=a_2=\cdots=a_{n/2-1}=0,
\eeq
and therefore $G(\epsilon)=1+O(\epsilon^n)$. 

It remains to analyze the endpoint $k=n/2$, which determines the coefficient of $\epsilon^n$.

\begin{theorem}[Endpoint obstruction and coefficient]
\label{th:order-n1}

Let $n\ge2$ be even and let $\varphi_1=\varphi_{n+1}=0$. 
If the lower-order coefficients $a_1,\dots,a_{n/2-1}$ vanish, then, 
with $\Gamma$ defined in \cref{eq:Gamma},
\beq
a_{n/2}
=
2(-1)^{n/2}\cos\left(n\alpha+\frac{\Gamma}{2}\right)-2.
\eeq
Consequently, $a_{n/2}$ cannot vanish identically as a function of $\alpha$. Therefore no choice of $n$ phases can make all nonconstant coefficients of $G(\epsilon)-1$ through order $\epsilon^n$ vanish for arbitrary $\alpha$.

For the UR phase family, $\Gamma\equiv -n\varphi_2\pmod{4\pi}$, and hence
\beq
a_{n/2}
=
2\left[
(-1)^{n/2}
\cos\left(n\alpha-\frac{n\varphi_2}{2}\right)-1
\right].
\eeq
Equivalently, for sufficiently small $\epsilon$,
\beq
\label{eq:1-F-expansion}
1-F
=
2\sin^2\left(
\frac{n\alpha}{2}
-\frac{n\varphi_2}{4}
+\frac{n\pi}{4}
\right)\epsilon^n.
\eeq
\end{theorem}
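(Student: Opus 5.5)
Our plan is to compute $a_{n/2}$ from \cref{eq:akFinal} and reuse the binomial bookkeeping from the proof of \cref{eq:equivalence}. By that theorem, vanishing of $a_1,\dots,a_{n/2-1}$ is equivalent to $\Tr(\tilde P_n^\dagger\tilde P_{n-2s})=2\binom{n/2}{s}$ for $s=0,\dots,n/2-1$. We put $k=n/2$, $b=n/2$ in \cref{eq:akFinal}. The $s=n/2$ term has coefficient $\binom{0}{0}=1$. Using $\binom{b-s}{k-s}\binom{b}{s}=\binom{b}{k}\binom{k}{s}$ and the alternating-sum identity from that proof, the sum over $s<n/2$ equals $2\sum_{s=0}^{k-1}(-1)^{k-s}\binom{k}{s}=-2$. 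Hence
\beq
a_{n/2}=\Tr(\tilde P_n^\dagger\tilde P_0)-2 .
\eeq
This step needs only the hypothesis and is independent of the particular phase choice.

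Next we compute $\Tr(\tilde P_n^\dagger \tilde P_0)$ directly. The matrix $\tilde P_0$ is the single product with no $Y$ factors, namely $R_nR_{n-1}\cdots R_1=R\bigl(\sum_j\chi_j\bigr)=R\bigl(2n\alpha+\sum_j\Delta_j\bigr)$. Since $\varphi_1=\varphi_{n+1}=0$, the sum telescopes: $\sum_j\Delta_j=\varphi_1-\varphi_{n+1}=0$, so $\tilde P_0=R(2n\alpha)$. From \cref{eq:U_0^dag}, $\tilde P_n^\dagger=U_0^\dagger=(-1)^{n/2}R(\Gamma)$. Therefore
\beq
\Tr(\tilde P_n^\dagger\tilde P_0)=(-1)^{n/2}\Tr R(2n\alpha+\Gamma)=2(-1)^{n/2}\cos\left(n\alpha+\frac{\Gamma}{2}\right),
\eeq
which gives the stated formula. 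The coefficient $a_{n/2}$ cannot vanish identically in $\alpha$: its $\alpha$-dependent part has amplitude $2$, so $a_{n/2}$ ranges over $[-4,0]$ and reaches $-4$ for suitable $\alpha$. This holds for every choice of phases not depending on $\alpha$, which proves the obstruction claim.

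For the UR family we use $\Gamma\equiv -n\varphi_2 \pmod{4\pi}$, which holds because $\frac{n(n-2)}{2}\Phi\in4\pi\mathbb Z$ as noted earlier. This shifts the cosine argument by a multiple of $2\pi$ and gives the second formula. For the fidelity we have $G=1+\tfrac12 a_{n/2}\epsilon^n$ exactly, since the polynomial in $\epsilon^2$ has degree $n/2$. Moreover $a_{n/2}$ is real, so $F=|1+\tfrac12 a_{n/2}\epsilon^n|=1+\tfrac12a_{n/2}\epsilon^n$ for small $\epsilon$, because $|a_{n/2}|\le4$. Writing $(-1)^{n/2}=\cos(n\pi/2)$, we get $1-(-1)^{n/2}\cos x=1-\cos(x+n\pi/2)=2\sin^2\bigl(\tfrac{x}{2}+\tfrac{n\pi}{4}\right)$ with $x=n\alpha-n\varphi_2/2$, which is \cref{eq:1-F-expansion}.

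The main obstacle is the signs: the parity $(-1)^{n/2}$ and the $4\pi$ versus $2\pi$ periodicity of $R$. The step where we use that $\tilde P_0$ reduces to a single telescoped rotation should be checked against the ordering conventions of \cref{eq:UE2}.
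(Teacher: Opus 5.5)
Your proposal is correct and follows essentially the same route as the paper. You reduce $a_{n/2}$ to $\Tr(\tilde P_n^\dagger\tilde P_0)-2$ via the equivalence theorem and the binomial theorem, evaluate $\tilde P_0=R(2n\alpha)$ by telescoping $\sum_j\Delta_j=\varphi_1-\varphi_{n+1}=0$, and then use $\Gamma\equiv-n\varphi_2 \pmod{4\pi}$ together with the reality and boundedness of $a_{n/2}$ to obtain the fidelity expansion. As in the paper, writing $G=1+\tfrac12 a_{n/2}\epsilon^n$ exactly for the UR family tacitly relies on $a_1,\dots,a_{n/2-1}$ vanishing there, which is supplied by the nonzero- and zero-signature Fourier identities.
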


\begin{proof}
When $k=n/2$, we have $n-2k=0$, so $\tilde P_0$ is the unique term in the expansion with no $Y$-factors. Hence
\beq
\tilde P_0
=
R_nR_{n-1}\cdots R_1
=
R\left(\sum_{j=1}^n\chi_j\right).
\eeq
Using \cref{eq:DC},
\beq
\sum_{j=1}^n\chi_j
=
\sum_{j=1}^n(\Delta_j+2\alpha)
=
(\varphi_1-\varphi_{n+1})+2n\alpha
=
2n\alpha,
\eeq
because $\varphi_1=\varphi_{n+1}=0$. Thus
\beq
\tilde P_0=R(2n\alpha).
\eeq

By \cref{eq:U_0^dag},
\beq
\tilde P_n^\dagger=U_0^\dagger=(-1)^{n/2}R(\Gamma).
\eeq
Therefore
\beq
\begin{aligned}
\Tr(\tilde P_n^\dagger\tilde P_0)
&=
(-1)^{n/2}\Tr\left(R(\Gamma+2n\alpha)\right)\\
&=
2(-1)^{n/2}\cos\left(n\alpha+\frac{\Gamma}{2}\right).
\end{aligned}
\eeq

Assume now that $a_1,\dots,a_{n/2-1}$ vanish. By \cref{eq:equivalence}, this is equivalent to
\beq
\Tr(\tilde P_n^\dagger\tilde P_{n-2s})
=
2\binom{n/2}{s},
\qquad 0\le s<n/2.
\eeq
Using \cref{eq:akFinal} with $k=n/2$, we obtain
\beq
a_{n/2}
=
\sum_{s=0}^{n/2}
(-1)^{n/2-s}
\Tr(\tilde P_n^\dagger\tilde P_{n-2s}).
\eeq
The contribution from $0\le s<n/2$ is
\beq
2\sum_{s=0}^{n/2-1}
(-1)^{n/2-s}\binom{n/2}{s}
=
-2
\eeq
by the binomial theorem. Hence
\beq
a_{n/2}
=
\Tr(\tilde P_n^\dagger\tilde P_0)-2
=
2(-1)^{n/2}\cos\left(n\alpha+\frac{\Gamma}{2}\right)-2.
\eeq
This function is not identically zero as a function of $\alpha$, independently of the value of $\Gamma$.  Therefore the endpoint coefficient cannot be cancelled uniformly in $\alpha$.

For the UR phases, $\Gamma\equiv -n\varphi_2\pmod{4\pi}$, so
\beq
a_{n/2}
=
2\left[
(-1)^{n/2}
\cos\left(n\alpha-\frac{n\varphi_2}{2}\right)-1
\right].
\eeq
Since the expansion of the HS overlap contains no powers higher than $\epsilon^n$, the UR phase family gives
\beq
\begin{aligned}
G(\epsilon)
&=
1+
\left[
(-1)^{n/2}
\cos\left(n\alpha-\frac{n\varphi_2}{2}\right)-1
\right]\epsilon^n\\
&=1-2\sin^2\left(
\frac{n\alpha}{2}
-\frac{n\varphi_2}{4}
+\frac{n\pi}{4}
\right)\epsilon^n.
\end{aligned}
\eeq
Therefore
\beq
1-F
=
1-
\left|
1-
2\sin^2\left(
\frac{n\alpha}{2}
-\frac{n\varphi_2}{4}
+\frac{n\pi}{4}
\right)\epsilon^n
\right|.
\eeq
In particular, for sufficiently small $\epsilon$ we obtain \cref{eq:1-F-expansion}.
\end{proof}

To complete the proof of \cref{th:MAIN}, it remains to prove the Fourier identities of \Cref{th:eq_40,th:eq_39}. We do this in the following section.

\section{\texorpdfstring{Proofs of Theorems~\ref{th:eq_40} and \ref{th:eq_39}}{Proofs of the Fourier identities}}
\label{sec:proof}

We start by proving a property of the quantities $A_s^{(k)}$ defined in \cref{eq:Asdef}, which  will be used in the proofs of both theorems.

\begin{lemma}\label{lem:odd_reflection}
For every $1 \leq k < \frac{n}{2}$ and every even $s$ with $0\leq |s|
 \leq 2k$ 
\beq
A_s^{(k)}=\omega^s (A_s^{(k)})^{*}
\label{eq1}
\eeq
\end{lemma}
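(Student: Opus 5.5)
The plan is to build an explicit involution on the index set of the sum \cref{eq:Asdef} that preserves the signature $S$ and sends the weighted signature $W$ to $s-W$ modulo $n$. Since $\omega^n=1$, exponents of $\omega$ only matter modulo $n$, so such a map turns $\sum\omega^{W(r)}$ into $\omega^{s}\sum\omega^{-W(r)}$.

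\textbf{The reflection.} Define the map $\rho$ on strictly increasing $2k$-tuples in $[1,n]$ by
\beq
\rho(r)=r',\qquad r'_j:=n+1-r_{2k+1-j},\quad j=1,\dots,2k.
\eeq
Then $r'$ is again strictly increasing with entries in $[1,n]$. Moreover $\rho\circ\rho=\mathrm{id}$, so $\rho$ is a bijection of the summation set.

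\textbf{Step 1: $\rho$ preserves the signature.} Substitute $i=2k+1-j$. The sign attached to position $j$ of $r'$ is
\beq
(-1)^{j+r'_j}=(-1)^{(2k+1-i)+(n+1-r_i)}=(-1)^{i+r_i}.
\eeq
The last equality holds because $2k$ and $n$ are even and the two $+1$'s cancel modulo $2$. This is the one place where evenness of $n$ is essential. Summing over $j$ gives $S(r')=S(r)$. Hence $\rho$ restricts to a bijection of the set $\{r:\ S(r)=s\}$ for each even $s$.

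\textbf{Step 2: how $W$ transforms.} Using the same sign identity,
\beq
W(r')=\sum_{i=1}^{2k}(-1)^{i+r_i}(n+1-r_i)=(n+1)S(r)-W(r).
\eeq
Therefore, for $S(r)=s$,
\beq
\omega^{W(r')}=\omega^{ns}\,\omega^{s}\,\omega^{-W(r)}=\omega^{s}\,\omega^{-W(r)},
\eeq
where $\omega^{ns}=1$ follows from $\omega^n=1$, established for both UR families.

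\textbf{Step 3: reindex.} Reindex the sum \cref{eq:Asdef} through the bijection $\rho$ and compare with \cref{eq:Asconj}:
\beq
A_s^{(k)}=\sum_{S(r)=s}\omega^{W(\rho(r))}=\omega^{s}\sum_{S(r)=s}\omega^{-W(r)}=\omega^{s}\bigl(A_s^{(k)}\bigr)^{*}.
\eeq
This is exactly \cref{eq1}.

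The only step needing real care is Step 1, that is, checking that the reflected sequence has the same signs position by position (after reversing the index order). Once that is confirmed, the rest is bookkeeping. The restriction $k<n/2$ is not used in this argument; the identity holds for $k=n/2$ as well.
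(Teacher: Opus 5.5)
Your proposal is correct and follows essentially the same route as the paper, which uses the identical reversal involution $\tau(r)_j=n+1-r_{2k+1-j}$, shows $S(\tau(r))=S(r)$ and $W(\tau(r))=(n+1)S(r)-W(r)$, and concludes by reindexing with $\omega^n=1$. Your side remark that the hypothesis $k<n/2$ is never used is also accurate.
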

\begin{proof}
For a given $k$, on the set of increasing sequences
\beq\label{Dset}
\mathcal D := \Bigl\{  r=(r_1,\dots,r_{2k})\in\mathbb Z^{2k}\ :\ 1\le r_1<\cdots<r_{2k}\le n \Bigr\}, 
\eeq
we define a map   $\tau   :   \mathcal D \rightarrow  \mathcal D$ componentwise as 
\beq
\tau(r)_j  :=  n+1-r_{2k+1-j}\quad (1\le j\le 2k).
\eeq
Direct verification shows that $\tau$ does indeed preserve the ordering in a sequence $r$ and it is a bijection, in fact an involution ($\tau^2$ is the identity map), since 
\beq
\tau(\tau(r))_j
=n+1-\bigl(n+1-r_{2k+1-(2k+1-j)}\bigr)=r_j. 
\eeq
We  now verify how the signature and weighted signature of $r$ and $\tau(r)$ are related. Using the definition of $\tau$, we have 
\beq 
\begin{aligned}
(-1)^j (-1)^{\tau(r)_j}&=(-1)^j(-1)^{n+1-r_{2k-j+1}}\\
=(-1)^{j+1}(-1)^{r_{2k-j+1}}&
=(-1)^{2k-j+1}(-1)^{r_{2k-j+1}}, 
\end{aligned}
\eeq
since $2k+1$ is odd. Summing over $j=1,...,2k$ and using a change of summation index $l=2k-j+1$, we have that $S(\tau(r))=S(r)$. 

The weighted signature $W$ transforms as follows:

\beq
\begin{aligned}
   W(\tau(r))
&= \sum_{j=1}^{2k}    (-1)^j (-1)^{\tau(r)_j} \tau(r)_j\\
 &=\sum_{j=1}^{2k}    (-1)^j (-1)^{\tau(r)_j}(n+1-r_{2k-j+1})\\
 &=(n+1)S(r)-\sum_{j=1}^{2k}(-1)^j (-1)^{n+1-r_{2k-j+1}}r_{2k-j+1}\\
 &=(n+1)S(r)-\sum_{l=1}^{2k}(-1)^l (-1)^{r_{l}}r_{l}\\
 &=(n+1)S(r)-W(r)
\end{aligned}
\eeq
where we again used the change of index $l=2k-j+1$. Therefore, using $\omega^n=1$, we have 
\beq
\omega^{W(\tau(r))}
= \omega^{(n+1)S(r)}\omega^{-W(r)}
= \omega^{S(r)}\big(\omega^{W(r)}\big)^*.
\eeq
Summing over all tuples with $S(r)=s$ and changing variables $r\mapsto \tau(r)$, we obtain \cref{eq1}. 

\end{proof}

\subsection{Proof of \texorpdfstring{\cref{th:eq_40}}{the nonzero-signature identity}}

\begin{proof}
In order to simplify the notation, we omit the superscript $^{(k)}$ in $A_s^{(k)}$ and write simply $A_s$. If $|s|>2k$, then no admissible sequence of length $2k$ has signature $s$ or $-s$, so $A_s=A_{-s}=0$ and the claim is trivial. Hence we assume $0<|s|\le 2k$.

We first consider \cref{lem:odd_reflection} written for $s$, and then written for $-s$ and conjugated. Summing the two relations, we have
\beq\label{simrel}
A_s+A_{-s}^*=\omega^s(A_s+A_{-s}^*)^*.
\eeq
Let
\beq
B_s:=A_s+A_{-s}^*.
\eeq
From \cref{simrel}, if $B_s$ is real then
\beq
B_s=\omega^s B_s,
\eeq
so $B_s=0$ whenever $\omega^s\ne1$. For the UR choices of $\Phi$, no nonzero even integer $s$ with $|s|<n$ satisfies $\omega^s=1$. Indeed, if $n=4m$, then $\omega$ has order $n$. If $n=4m+2$, then $\omega$ has order $n$ when $m$ is odd and order $n/2=2m+1$ when $m$ is even. In the latter case $n/2$ is odd, so no nonzero even $s$ with $|s|<n$ can be a multiple of the order of $\omega$. Thus it remains only to prove that $B_s$ is real.

We decompose the set ${\cal D}$ in \cref{Dset} as the disjoint union of ${\cal D}_I$ and ${\cal D}_B$. Here $I$ stands for ``internal'', and ${\cal D}_I$ consists of all sequences with $r_{2k}<n$. The letter $B$ stands for ``boundary'', and ${\cal D}_B$ consists of all sequences with $r_{2k}=n$. Define
\beq\label{AsIAsB}
A_{s,I}:=\sum_{\substack{r \in {\cal D}_I\\ S(r)=s}} \omega^{W(r)},
\quad
A_{s,B}:=\sum_{\substack{r \in {\cal D}_B\\ S(r)=s}} \omega^{W(r)},
\eeq
so that
\beq\label{Adeco}
A_s=A_{s,I}+A_{s,B}.
\eeq
Similarly to the map $\tau$ used in \cref{lem:odd_reflection}, define
\beq
\tau_I:{\cal D}_I\to{\cal D}_I,
\quad
\tau_B:{\cal D}_B\to{\cal D}_B
\eeq
by
\beq\label{tauI}
\tau_I(r)_j=n-r_{2k+1-j},
\quad 1\le j\le 2k,
\eeq
and
\beq\label{tauB}
\tau_B(r)_j:=n-r_{2k-j}\quad(1\le j\le 2k-1),
\quad
\tau_B(r)_{2k}:=n.
\eeq
Direct verification similar to the one done for $\tau$ in \cref{lem:odd_reflection} shows that both $\tau_I^2$ and $\tau_B^2$ are the identity on the corresponding domains, and therefore the two maps are bijections on the corresponding domains. Furthermore, direct calculation gives
\begin{align}
S(\tau_I(r))&=-S(r),
&
W(\tau_I(r))&=W(r)-nS(r),
\label{SWI}\\
S(\tau_B(r))&=S(r),
&
W(\tau_B(r))&=n(S(r)+1)-W(r).
\label{SWB}
\end{align}
Since $\omega^n=1$, these identities imply
\beq
\omega^{W(\tau_I(r))}=\omega^{W(r)},
\quad
\omega^{W(\tau_B(r))}=\omega^{-W(r)}.
\eeq

Consider now $A_{s,I}$ in \cref{AsIAsB}. Using \cref{SWI}, we obtain
\beq\label{ASIprop}
A_{s,I}
=
\sum_{\substack{r \in {\cal D}_I\\ S(r)=s}} \omega^{W(r)}
=
\sum_{\substack{r \in {\cal D}_I\\ S(r)=s}} \omega^{W(\tau_I(r))}
=
\sum_{\substack{r \in {\cal D}_I\\ S(r)=-s}} \omega^{W(r)}
=
A_{-s,I}.
\eeq
Analogously, using \cref{SWB},
\beq\label{ASBprop}
A_{s,B}
=
\sum_{\substack{r \in {\cal D}_B\\ S(r)=s}} \omega^{W(r)}
=
\sum_{\substack{r \in {\cal D}_B\\ S(r)=s}} \omega^{W(\tau_B(r))}
=
\sum_{\substack{r \in {\cal D}_B\\ S(r)=s}} \omega^{-W(r)}
=
A_{s,B}^*.
\eeq
Thus $A_{s,B}$ is real. These relations are valid for every $s$, and therefore for $s$ and $-s$ simultaneously.

Using \cref{Adeco}, we have
\beq\label{finalfor}
A_s+A_{-s}^*
=
\left(A_{s,I}+A_{-s,I}^*\right)
+
\left(A_{s,B}+A_{-s,B}^*\right).
\eeq
The second parenthesis is real because both $A_{s,B}$ and $A_{-s,B}$ are real. The first parenthesis is real because \cref{ASIprop} gives $A_{s,I}=A_{-s,I}$. Hence { $B_s=A_s+A_{-s}^*$} is real, and the proof is complete.
\end{proof}

\subsection{Proof of \texorpdfstring{\cref{th:eq_39}}{the zero-signature identity}}

Fix $k$ with $1\le k<n/2$. We use a matrix-product generating function to prove the zero-signature identity. Replacing $\Phi$ by $-\Phi$ replaces $\omega$ by $\omega^{-1}$ and therefore conjugates each sum $A_s^{(k)}$. Since the claimed value of $A_0^{(k)}$ is real, it suffices to prove the result for the positive choice of $\Phi$.

For real $x$, define
\beq
\omega^x:=\exp(i\Phi x/2),
\quad
(\omega^x)^*=\omega^{-x}.
\eeq
We encode the sequences $1\le r_1<r_2<\cdots<r_l\le n$ of length $l$ in a noncommutative matrix product. This allows us to keep track of the different contributions of the elements of the sequence to the weighted signature $W$ and the signature $S$, according to the relative position of each term in the sequence. Specifically, we define a sequence $\{\beta_j\}_{j=1}^n$, chosen differently in the cases $n=4m$ and $n=4m+2$ with $m$ odd, and $n=4m+2$ with $m$ even, and set
\beq
K_j:=\begin{pmatrix}0&\omega^{\beta_j}\\ \omega^{-\beta_j}&0\end{pmatrix}.
\eeq
Define
\beq\label{Pt}
P(t):=(I+tK_1)(I+tK_2)\cdots(I+tK_n).
\eeq
We can write
\beq\label{DiagAdiag}
P(t)=
\sum_{\substack{0\le l\le n\\ l\ \mathrm{even}}}t^lD_l+
\sum_{\substack{0\le l\le n\\ l\ \mathrm{odd}}}t^lA_l.
\eeq
where $D_l$ is diagonal and $A_l$ is antidiagonal, say
\beq
D_l=\begin{pmatrix}\lambda_l&0\\0&\lambda_l^*\end{pmatrix},
\quad
A_l=\begin{pmatrix}0&\mu_l\\ \mu_l^*&0\end{pmatrix}.
\eeq

For real parameters $\gamma$ and $c$, define
\beq\label{betaj}
\beta_j:=(-1)^{j+1}\left(j+\gamma\right)+c.
\eeq
Then, for every sequence $r$ with $1\le r_1<r_2<\cdots<r_{2k}\le n$,
\beq
\begin{aligned}
\sum_{j=1}^{2k}(-1)^{j-1}\beta_{r_j}
&=
\sum_{j=1}^{2k}(-1)^j(-1)^{r_j}r_j
+
\gamma\sum_{j=1}^{2k}(-1)^j(-1)^{r_j} \\
&= W(r)+\gamma S(r).
\end{aligned}
\eeq
The constant $c$ drops out because
\beq
\sum_{j=1}^{2k}(-1)^{j-1}=0.
\eeq
Considering specifically the case of $l$ even in \cref{DiagAdiag}, with $l=2k>0$,
\beq\label{lambdal}
\lambda_{2k}=\sum_{1\le r_1< r_2<\cdots <r_{2k} \le n}\omega^{W(r)+\gamma S(r)},
\eeq
which we can write as
\beq\label{lambdal2}
\lambda_{2k}=A_0^{(k)} +\sum_{a=1}^k \left( \omega^{\gamma 2a} A_{2a}^{(k)} +\omega^{-\gamma 2a} A_{-2a}^{(k)} \right).
\eeq
By \Cref{th:eq_40}, for every nonzero even $s$ in the relevant range,
\beq
A_{-s}^{(k)}=-(A_s^{(k)})^*.
\eeq
Hence, for each $a\ge1$,
\beq
\omega^{2a\gamma} A_{2a}^{(k)}
+
\omega^{-2a\gamma} A_{-2a}^{(k)}
=
\omega^{2a\gamma} A_{2a}^{(k)}
-
\left(\omega^{2a\gamma} A_{2a}^{(k)}\right)^*,
\eeq
which is purely imaginary. On the other hand, \cref{lem:odd_reflection} with $s=0$ gives
\beq
A_0^{(k)}=(A_0^{(k)})^*,
\eeq
so $A_0^{(k)}$ is real. Therefore, once we prove below that
\beq
\lambda_{2k}=(-1)^k\binom{n/2}{k},
\eeq
comparison of real parts in \cref{lambdal2} gives
\beq
A_0^{(k)}=(-1)^k\binom{n/2}{k},
\eeq
which is the desired identity.

\subsubsection{Case \texorpdfstring{$n=4m$}{n=4m}}

Choose $\gamma=c=-\frac12$ in \cref{betaj}. Then
\beq\label{betajadd}
\beta_j=
\begin{cases}
j-1, & j\ \text{odd},\\
-j, & j\ \text{even}.
\end{cases}
\eeq
Write
\beq
P(t)=T_1(t)\cdots T_{2m}(t)T_{2m+1}(t)\cdots T_{4m}(t),
\eeq
where
\beq
T_j(t):=I+tK_j
=
\begin{pmatrix}
1 & t\omega^{\beta_j}\\
t\omega^{-\beta_j} & 1
\end{pmatrix}.
\eeq
Since $\omega^{\pm 2m}=-1$, \cref{betajadd} gives
\beq
\omega^{\beta_{j+2m}}=-\omega^{\beta_j},
\quad 1\le j\le 2m.
\eeq
Thus
\beq
T_{j+2m}(t)=E T_j(t) E,
\quad
E:=\begin{pmatrix}1&0\\0&-1\end{pmatrix}.
\eeq
If
\beq
Q(t):=T_1(t)\cdots T_{2m}(t),
\eeq
then
\beq
P(t)=Q(t)E Q(t)E.
\eeq

For $1\le j\le m$, direct verification from \cref{betajadd} gives
\beq
\beta_j-\beta_{2m-j+1}=\pm 2m.
\eeq
Hence
\beq
\omega^{\beta_{2m-j+1}}=-\omega^{\beta_j},
\quad
\omega^{-\beta_{2m-j+1}}=-\omega^{-\beta_j},
\eeq
and therefore
\beq
T_j(t)T_{2m-j+1}(t)=(1-t^2)I.
\eeq
Applying this first to $T_mT_{m+1}$, then to $T_{m-1}T_{m+2}$, and so on, gives
\beq
Q(t)=(1-t^2)^m I.
\eeq
Consequently,
\beq
P(t)=Q(t)E Q(t)E=(1-t^2)^{2m}I.
\eeq
Thus the coefficient $\lambda_{2k}$ of $t^{2k}$ in the $(1,1)$-entry of $P(t)$ is
\beq
\lambda_{2k}=(-1)^k\binom{2m}{k}=(-1)^k\binom{n/2}{k}.
\eeq
By the reduction preceding this case analysis, $A_0^{(k)}=\lambda_{2k}$. This proves \cref{th:eq_39} for $n=4m$.

We now turn to the case $n=4m+2$. Set
\beq
N:=\frac n2=2m+1.
\eeq
For the positive choice of $\Phi$, we have
\beq
\omega=e^{im\pi/N},
\quad
\omega^N=(-1)^m.
\eeq
As explained above, the negative choice of $\Phi$ follows by complex conjugation. We separate two subcases according to whether $m$ is odd or even.

\subsubsection{Case \texorpdfstring{$n=4m+2$}{n=4m+2}, \texorpdfstring{$m$}{m} odd}

Choose $\gamma=0$ and $c=N/2$ in \cref{betaj}. Then
\beq\label{newbetaj}
\beta_j=(-1)^{j+1}j+\frac N2.
\eeq
Write again
\beq
P(t)=T_1(t)\cdots T_n(t),
\quad
T_j(t)=
\begin{pmatrix}
1 & t\omega^{\beta_j}\\
t\omega^{-\beta_j} & 1
\end{pmatrix}.
\eeq
For every $j=1,\dots,N$,
\beq
\beta_{j+N}=-\beta_j+(1+(-1)^j)N.
\eeq
Since $m$ is odd, $\omega^N=-1$, and hence
\beq
\omega^{\beta_{j+N}}=\omega^{-\beta_j}.
\eeq
Therefore, with
\beq
J:=\begin{pmatrix}0&1\\1&0\end{pmatrix},
\eeq
we have
\beq
T_{j+N}(t)=JT_j(t)J,
\quad 1\le j\le N.
\eeq
Defining
\beq
Q(t):=T_1(t)\cdots T_N(t),
\eeq
we obtain
\beq
P(t)=Q(t)JQ(t)J.
\eeq

Now consider
\beq
Q(t)=T_1(t)\cdots T_m(t)T_{m+1}(t)\cdots T_{2m}(t)T_N(t).
\eeq
For $1\le j\le m$, we check from \cref{newbetaj} that
\beq
\beta_j-\beta_{2m-j+1}=\pm N.
\eeq
Since $m$ is odd, $\omega^{\pm N}=-1$, and hence
\beq
T_j(t)T_{2m-j+1}(t)=(1-t^2)I.
\eeq
Pairing factors from the middle outward gives
\beq
Q(t)=(1-t^2)^mT_N(t).
\eeq
Moreover,
\beq
\beta_N=\frac{3N}{2},
\quad
\omega^{\beta_N}=\exp \left(\frac{3im\pi}{2}\right)
=
(-1)^{(m+1)/2}i .
\eeq
Hence
\beq
T_N(t)=
\begin{pmatrix}
1&(-1)^{(m+1)/2}it\\
-(-1)^{(m+1)/2}it&1
\end{pmatrix}.
\eeq
A direct multiplication gives
\beq
T_N(t)JT_N(t)J=(1-t^2)I.
\eeq
Therefore
\beq
\begin{aligned}
P(t)
&=
(1-t^2)^{2m}T_N(t)JT_N(t)J
=
(1-t^2)^{2m+1}I\\
&= (1-t^2)^N I.
\end{aligned}
\eeq
Thus
\beq
\lambda_{2k}=(-1)^k\binom{N}{k}=(-1)^k\binom{n/2}{k},
\eeq
and the preceding reduction gives $A_0^{(k)}=\lambda_{2k}$. This proves \cref{th:eq_39} for $n=4m+2$ with $m$ odd.

\subsubsection{Case \texorpdfstring{$n=4m+2$}{n=4m+2}, \texorpdfstring{$m$}{m} even}

Again set
\beq
N:=\frac n2=2m+1.
\eeq
Choose
\beq
\gamma=\frac{N}{2m},
\quad
c=0
\eeq
in \cref{betaj}. Once again write
\beq
P(t)=T_1(t)\cdots T_n(t),
\quad
T_j(t)=
\begin{pmatrix}
1 & t\omega^{\beta_j}\\
t\omega^{-\beta_j} & 1
\end{pmatrix}.
\eeq
For every $j=1,\dots,N$, direct calculation gives
\beq
\beta_{j+N}=-\beta_j\pm N.
\eeq
Since $m$ is even, $\omega^N=1$, and hence
\beq
\omega^{\beta_{j+N}}=\omega^{-\beta_j}.
\eeq
Therefore
\beq
T_{j+N}(t)=JT_j(t)J,
\quad 1\le j\le N,
\eeq
and, with $Q(t):=T_1(t)\cdots T_N(t)$,
\beq
P(t)=Q(t)JQ(t)J.
\eeq

Now focus on
\beq
Q(t)=T_1(t)\cdots T_m(t)T_{m+1}(t)\cdots T_{2m}(t)T_N(t).
\eeq
For $1\le j\le m$, the definition of $\beta_j$ gives
\beq
\beta_j-\beta_{N-j}
=
(-1)^{j+1}\left(N+\frac Nm\right)
=
(-1)^{j+1}N\left(\frac{m+1}{m}\right).
\eeq
Since $m+1$ is odd,
\beq
\omega^{\beta_j-\beta_{N-j}}=-1.
\eeq
Hence
\beq
T_j(t)T_{N-j}(t)=(1-t^2)I.
\eeq
Pairing factors from the middle outward gives
\beq
Q(t)=(1-t^2)^mT_N(t).
\eeq
Finally,
\beq
\beta_N=\frac{N^2}{2m},
\quad
\omega^{\beta_N}=\exp \left(\frac{iN\pi}{2}\right)
=
i,
\eeq
where we used $N=2m+1\equiv 1 \pmod 4$, since $m$ is even. Thus
\beq
T_N(t)=
\begin{pmatrix}
1&it\\
-it&1
\end{pmatrix},
\eeq
and again
\beq
T_N(t)JT_N(t)J=(1-t^2)I.
\eeq
Therefore
\beq
\begin{aligned}
P(t)
&=
(1-t^2)^{2m}T_N(t)JT_N(t)J
=
(1-t^2)^{2m+1}I\\
&=
(1-t^2)^N I.
\end{aligned}
\eeq
It follows that
\beq
\lambda_{2k}=(-1)^k\binom{N}{k}=(-1)^k\binom{n/2}{k}.
\eeq
By the preceding reduction, $A_0^{(k)}=\lambda_{2k}$. This completes the proof of \cref{th:eq_39}.

\section{Conclusion}\label{sec:conclusion}

We have given a rigorous proof of the even-$n$ universally robust dynamical decoupling sequences introduced in Ref.~\cite{PhysRevLett.118.133202}. Working in the same effective pulse-cycle model and using the error variable $\epsilon=\sqrt{1-p}$, we proved that the UR phase prescription cancels all nonconstant Taylor coefficients of the normalized HS overlap
$G(\epsilon)=\frac12\Tr(U_0^\dagger U_\epsilon)$
through order $\epsilon^{n-1}$. Since only even powers of $\epsilon$ occur, this is equivalent to cancellation through order $\epsilon^{n-2}$. Therefore
\beq
G(\epsilon)=1+O(\epsilon^n),
\quad
1-F=O(\epsilon^n).
\eeq
Moreover, the order-$\epsilon^n$ coefficient of the fidelity error is not identically zero as a function of $\alpha$. Since $1-p=\epsilon^2$, this is precisely the $(1-p)^{n/2}$ scaling claimed in Ref.~\cite{PhysRevLett.118.133202}.

Our proof also makes the cancellation mechanism explicit. The Taylor expansion of the HS overlap reduces the problem to trace identities involving the operators $\tilde P_{n-2k}$. These traces are then organized by the alternating linear form $\mathcal L_r$, the signature $S(r)$, and the weighted signature $W(r)$. In this way, the high-order robustness of the UR construction is traced to a combinatorial and root-of-unity structure, and the vanishing of the relevant amplitude coefficients is reduced to Fourier-type identities.

The proof applies to even-length UR sequences within the effective model of Ref.~\cite{PhysRevLett.118.133202}, under the same assumptions used there: coherent evolution over a single DD sequence, identical pulses within that sequence apart from the controlled phase shifts, and an environment whose correlation time exceeds the sequence duration.

Several directions remain open. It would be interesting to extend the analysis to odd-$n$ constructions, to related robust DD and composite-pulse families, and to models with additional control constraints or faster environmental fluctuations. More broadly, the present work shows that the URDD construction is governed by a particular algebraic and Fourier structure. We believe that this structure may be useful in the systematic design and analysis of new robust control sequences in different and more general contexts.

\section*{Acknowledgment}
This material is based upon work supported in part by the U.S. Army Research Laboratory and the U.S. Army Research Office under contract/grant number W911NF2310255. D.~D'Alessandro also acknowledges support from a Scott Hanna Professorship at Iowa State University.

\appendix

\section{\texorpdfstring{Proof of \cref{eq:Altrecurs}}{Proof of Eq. Altrecurs}}
\label{app:Altrecurs}

Let
\beq
B_j:=(x_{r_j+1},\dots,x_{r_{j+1}}),
\quad j=0,1,\dots,2k,
\eeq
where $r_0=0$ and $r_{2k+1}=n$.  Thus the right-hand side of
\cref{eq:Altrecurs} is obtained by taking the alternating sum on each block
$B_j$, and multiplying that block by the prefactor $(-1)^{j+r_j}$.

We compare the coefficients of the variables $x_1,\dots,x_n$ on both sides.
Fix $j\in\{0,\dots,2k\}$, and let $i$ satisfy $r_j<i\le r_{j+1}$. Then
$x_i$ appears only in the $j$'th block, and its coefficient on the
right-hand side of \cref{eq:Altrecurs} is
\beq
(-1)^{j+r_j}(-1)^{ i-(r_j+1)}
=
(-1)^{i+j-1}.
\eeq
Hence, within each block $B_j$, the signs alternate from one entry to the
next.

Next we check what happens at a breakpoint $r_j$ with $1\le j\le 2k$.
The last entry of the previous block is $x_{r_j}$, and its coefficient is
\beq
(-1)^{j-1+r_{j-1}}(-1)^{ r_j-r_{j-1}-1}
=
(-1)^{j+r_j}.
\eeq
The first entry of the next block is $x_{r_j+1}$, and its coefficient is
\beq
(-1)^{j+r_j}(-1)^0
=
(-1)^{j+r_j}.
\eeq
Therefore the sign is repeated when we pass from $x_{r_j}$ to
$x_{r_j+1}$, exactly as required by the defining description of
$\mathcal L_r(X)$.

Finally, on the first block $B_0=(x_1,\dots,x_{r_1})$, the coefficient of
$x_1$ is
\beq
(-1)^{0+r_0}(-1)^0=1,
\eeq
since $r_0=0$. Thus the sign pattern produced by the right-hand side starts
with $+1$, alternates within each block, and repeats the sign at each
breakpoint $r_j$. This is precisely the original definition of
$\mathcal L_r(X)$.

Hence the block formula \cref{eq:Altrecurs} is identical to the original
sign-pattern definition of $\mathcal L_r(X)$.

If we allow $r_j=r_{j+1}$ for some $j$, then the corresponding block is
empty and contributes $0$ by convention, so the same argument remains valid.

\section{\texorpdfstring{Proof of \cref{Nicelemma}}{Proof of Lemma Nicelemma}}
\label{app:Nicelemma}

\begin{proof}
Applying \cref{eq:Altrecurs} to $X_N=(1,2,\dots,n)$, we obtain
\beq\label{eq:insert6}
\mathcal{L}_r(X_N)=\sum_{j=0}^{2k} (-1)^{j+r_j} \mathcal{L}(r_j+1,r_j+2,\dots,r_{j+1}).
\eeq
We also have
\beq
\begin{aligned}
    &\mathcal{L}(r_j+1,\dots,r_{j+1})=r_j\sum_{s=1}^{r_{j+1}-r_j} (-1)^{s-1}+\sum_{s=1}^{r_{j+1}-r_j}(-1)^{s-1}s\\
    &\quad=r_j\frac{(1-(-1)^{r_{j+1}-r_j})}{2}\\
    &\quad\quad+(-1)^{r_{j+1}-r_j} \left( \frac{(-1)^{r_{j+1}-r_j}-1-2(r_{j+1}-r_j)}{4}\right)\\
&\quad=\frac{1-(-1)^{r_{j+1}-r_j}}{4}+\frac{r_j-r_{j+1}(-1)^{r_{j+1}-r_j}}{2},
\end{aligned}
\eeq
where in the last equality we used the fact that $\sum_{s=1}^a (-1)^{s-1}s$ is $-\frac{a}{2}$ if $a$ is even and $\frac{a+1}{2}$ if $a$ is odd. This identity is easily proved by induction on $a$, and we use it in the form
\beq
\sum_{s=1}^a (-1)^{s-1}s=(-1)^a \left( \frac{(-1)^a-1-2a}{4}\right)
\eeq
with $a=r_{j+1}-r_j$.

Using this in \cref{eq:insert6}, we have
\beq
\begin{aligned}
\mathcal{L}_r(X_N)
&=
\frac14
\sum_{j=0}^{2k} (-1)^j
\left((-1)^{r_j}-(-1)^{r_{j+1}}\right)\\
&\quad+
\frac12
\sum_{j=0}^{2k}(-1)^j
\left((-1)^{r_j}r_j-(-1)^{r_{j+1}}r_{j+1}\right).
\end{aligned}
\eeq
Using $r_0=0$, $r_{2k+1}=n$, and $n$ even, the first sum telescopes to
\beq
\frac12\sum_{j=1}^{2k}(-1)^{j+r_j}
=
\frac{S(r)}{2}.
\eeq
The second sum telescopes to
\beq
\sum_{j=1}^{2k}(-1)^{j+r_j}r_j-\frac n2
=
W(r)-\frac n2.
\eeq
Therefore
\beq
\mathcal L_r(X_N)
=
\frac{S(r)}2+W(r)-\frac n2
=
\frac{S(r)-n}{2}+W(r),
\eeq
which is \cref{eq:nicelemmaeq}.
\end{proof}

%


\end{document}